\documentclass[journal]{IEEEtran}
\IEEEoverridecommandlockouts

\usepackage{cite}
\usepackage{graphicx}
\graphicspath{{.../FIG/}}
\DeclareGraphicsExtensions{.pdf,.jpeg,.png,.webp,.jpg}
\usepackage{stfloats}
\usepackage[hidelinks]{hyperref}
\usepackage{amsmath,mathtools}
\usepackage{amsfonts,amssymb}
\usepackage{bm}
\usepackage{algorithm}
\usepackage{algorithmic}

\usepackage{array}
\renewcommand\arraystretch{1}
\usepackage{cuted} 
\usepackage{booktabs}
\usepackage{multirow}
\usepackage{tabularx}
\usepackage{longtable}
\usepackage{tikz}
\usetikzlibrary{arrows.meta, positioning, shapes.geometric}
\usepackage{mathrsfs}

\usepackage{url}

\usepackage{amsthm}

\newtheorem{theorem}{Theorem}[section]
\newtheorem{lemma}[theorem]{Lemma}

\newtheorem{corollary}[theorem]{Corollary}

\newtheorem{remark}[theorem]{Remark}

\usepackage{color}
\definecolor{BLUE}{RGB}{0,114,189}
\definecolor{RED}{RGB}{217,83,25}
\definecolor{YELLOW}{RGB}{237,177,32}
\definecolor{PURPLE}{RGB}{126,47,142}

\usepackage{lipsum}
\usepackage{amssymb}
\usepackage{booktabs}
\usepackage{enumerate}
\usepackage{setspace}
\usepackage{float}

\usepackage[caption=false,font=normalsize,labelfont=sf,textfont=sf]{subfig}
\usepackage{textcomp}
\usepackage{verbatim}

\usepackage{subfiles}

\usepackage{soul}
\usepackage{textcomp}
\usepackage{xcolor}
\definecolor{myblue}{RGB}{0,112,192}

\begin{document}
\bstctlcite{bstcontrol}

\title{Impacts of Heterogeneous Grid-Forming Devices on Power System Dynamics Quantified by DW Shells \vspace{3mm}} 

\author{~Liangxiao~Luo, Linbin~Huang, Hangyu~Chen, Ruohan~Leng, Zhixian~Hou, Kehao~Zhuang, and Huanhai~Xin 
\vspace{-7mm}
\thanks{This work was supported by the Smart Grid National Science and Technology Major Project of China under Grant 2026ZD0812903. 

The authors are with the College of Electrical Engineering, Zhejiang University, Hangzhou 310027, China. (E-mail: hlinbin@zju.edu.cn).}
}
\maketitle

    \begin{abstract}

    The concept of grid-forming (GFM) converters has gained great attention in the past years. However, it remains challenging to analyze and quantify the impacts of heterogeneous GFM devices (e.g., GFM energy storage systems, GFM wind turbines, GFM HVDC stations) on power system dynamics, especially when taking into account the complex interaction between GFM converters and grid-following (GFL) converters. To this end, this paper focuses on the decentralized and scalable stability analysis of power systems containing both GFM and GFL converters, where we use Davis–Wielandt (DW) shells to characterize the dynamics of the converters and the power grid. In particular, we analytically derive how integrating heterogeneous GFM converters affects the DW shell of the power grid and therefore the system stability. Our approach does not require the detailed parameters or control schemes of the GFM converters; instead, we define the local passivity and imaginary-axis indices of GFM converters to compactly describe their characteristics. These two indices can be conveniently obtained by testing a GFM converter and greatly simplify the stability analysis and computation when handling large-scale power systems.

    \end{abstract}

\vspace{-1mm}
\begin{IEEEkeywords}
Davis--Wielandt (DW) shell, decentralized stability analysis,
grid-forming converters, power grid strength.
\end{IEEEkeywords}

\section{Introduction}
    \IEEEPARstart{M}ODERN power systems feature the large-scale integration of heterogeneous converter-based resources, including renewable power plants, energy-storage systems, HVDC links, and power-electronic loads, among others \cite{Cheng2023real,Hatziargyriou2021Def,Milano2018lowinertia,li2022revisiting}. 
    As their penetration increases, the small-signal stability of modern power systems is increasingly determined by the dynamic interactions between converters and the network.

    Grid-following (GFL) and grid-forming (GFM) converters represent two major paradigms of converters. GFL converters are currently widely deployed in practice. They rely on phase-locked loops to synchronize with the external grid voltage~\cite{2020huangGridsyn} and are therefore sensitive to the power network characteristics seen at their terminals. In weak grids, converter--network interactions may give rise to poorly damped sub- and super-synchronous oscillations.
    By contrast, GFM converters can establish local voltage and frequency references and are expected to support the operation of converter-dominated systems~\cite{2017Achieving}.
    The nature and extent of this support, however, are not uniform across different types of GFM converters, as their dynamic behaviors vary not only across applications, such as grid-forming wind generators, energy-storage systems, and HVDC stations, but also across control schemes, including virtual synchronous generator (VSG) control, droop control, virtual oscillator control, and matching control~\cite{Rosso2021GFM,Wu2016VSG,Johnson2016VOC,Matevosyan2019GFM, Huang2017ViSynC}. Quantifying the stability support provided by these heterogeneous GFM devices is therefore challenging, especially when they are impacting the system simultaneously. 

    Existing studies have assessed the stability support of GFM converters using detailed dynamic models. For instance, state-space and impedance models can be used to retain the control dynamics of individual GFM devices and relate their parameters to closed-loop modes or converter-network interactions \cite{Gu2021Imp,Niu202state}. Such analyses are suitable when the control structure and parameters of each device are available. However, they become difficult to apply to large systems containing heterogeneous GFM devices, especially with proprietary control schemes.
    GFM support has also been assessed through its enhancement of grid strength. In~\cite{yang2021placing,xin2025howmany}, the generalized short-circuit ratio (gSCR) is used to derive stability conditions by requiring the gSCR to be large enough, which also investigated how GFM converters increase gSCR and provided useful guidance for the GFM placement and capacity allocation. However, these methods usually describe the GFM contribution through a static or quasi-static voltage-source equivalence and therefore do not distinguish devices with different frequency-domain terminal dynamics. Impedance-based formulations can retain these dynamics by incorporating the full multi-input multi-output (MIMO) terminal impedance of converters \cite{Henderson2022SCR,Lamrani2027GFMPlacement}. Although the required terminal impedance can be obtained from black-box measurements, impedance approaches require constructing and analyzing a high-order system impedance matrix. This becomes intractable and computationally demanding as the number of converters increases.
    It is thus favorable to have a low-dimensional and compact local description that quantifies how each heterogeneous GFM device contributes to the system stability, but such a local description is still missing.

    In recent years, the concept of decentralized stability certificates has gained considerable attention, which offers a scalable alternative for analyzing large-scale converter-dominated power systems. Certificates such as passivity, small-gain, and small-phase conditions all analyze stability using local device properties and network-side information, without constructing the full closed-loop system model~\cite{Harnefors2016passivity,huang2024gain,chen2024phase,Wang2024phase}. Graphical approaches based on the scaled relative graph (SRG) and the Davis-Wielandt (DW) shell further provide unified gain-phase descriptions of the device-grid interactions~\cite{Baron2026SRGs,feng2025unified,huang2025DW,leng2026operating}. However, these methods generally put GFM converters on the device side when partitioning the closed-loop system, and their contribution to the power grid strength cannot be explicitly reflected or efficiently computed when multiple heterogeneous GFM converters are considered. It is even more challenging to quantify how heterogeneous GFM converters impact the stability of GFL converters using existing methods.

    To fill these gaps, this paper develops a geometric method to quantify the stability support of heterogeneous GFM converters. Our approach characterizes each GFM device through local passivity and imaginary-axis indices, which can be conveniently obtained from its black-box admittance model. The GFM dynamics are then fused into the power network model, which interacts with the remaining GFL converters. We rigorously prove that such a fusion process requires only knowing the two local passivity and imaginary-axis indices. To be specific, we derive bounds for the DW shell of the equivalent power network which includes the GFM dynamics via Schur complement, and we find that such bounds depend only on the proposed passivity and imaginary-axis indices of GFM converters. In this manner, we do not need to know the detailed models to capture the stability impact of heterogeneous GFM converters. Moreover, the resulting DW shell envelope yields decentralized stability certificates to analyze how GFL converters interact with the power network and the GFM converters. In summary, our approach avoids dealing with the detailed frequency-domain admittance model of GFM converters when analyzing the system-level dynamics, and is capable of handling large-scale converter-dominated power systems where heterogeneous GFM converters are installed to improve the system stability.

    The remainder of this paper is organized as follows. Section~\ref{sec:system-model} presents the system model and the equivalent power network reformulation. Section~\ref{sec:geometric-conditions} introduces the DW shell and geometric stability conditions. Section~\ref{sec:gfm-support-analysis} derives the envelope of the DW shell of the equivalent power network, the bounds obtained from local GFM indices, and the resulting stability certificates. Section~\ref{sec:case-studies} presents the case studies and time-domain simulations. Section~\ref{sec:conclusions} concludes the paper.

   \section{System Modeling and Equivalent Network Reformulation}\label{sec:system-model}

   This section presents the converter and network admittance models. To explicitly analyze the stabilizing contribution of the GFM converters, we partition the model according to the GFL and GFM nodes and apply the Schur complement to obtain the equivalent network which includes the GFM dynamics.

\vspace{-2mm}
   \subsection{Modeling of Converter and Network Dynamics}
    Consider a power system that consists of $n$ GFL converters, $m$ GFM devices, $k$ interior network nodes, and an infinite bus, as illustrated in Fig.~\ref{fig:hybrid-system-topology}.
    Although our approach can handle any types of line dynamics, for simplicity of illustration, line resistances are ignored in what follows, and the network is modeled by inductive dynamics.
    The dynamics of a grid line connecting Nodes $i$ and $j$ are given by~\cite{dong2019small}:
   \begin{equation}\begin{bmatrix}
   \Delta I_{x,ij} \\
   \Delta I_{y,ij}\end{bmatrix}=
   B_{ij}{R}(s)
   \begin{bmatrix}
   \Delta U_{x,i}-\Delta U_{x,j} \\
   \Delta U_{y,i}-\Delta U_{y,j}
   \end{bmatrix},\label{eq:line-dynamics}
   \end{equation}
   where \(R(s)= \left[\begin{smallmatrix}s & -\omega_0\\\omega_0 & s
   \end{smallmatrix}\right]^{-1}\) is the inductance rotation matrix which represents the admittance dynamics of an inductive line with unit susceptance, and $B_{ij}=1/(L_{ij}\omega_0)$ is the line susceptance; in a global $xy$ reference frame,
   $\begin{bmatrix}\Delta I_{x,ij} , \Delta I_{y,ij}\end{bmatrix}^\top$ and $\begin{bmatrix}\Delta U_{x,i} , \Delta U_{y,i}\end{bmatrix}^\top$ denote the current vector from Node $i$ to Node $j$ and the voltage vector at Node $i$, respectively. 
   The infinite bus can be considered as the ground node in small-signal analysis. Hence, let
   \begin{equation}
    \bm{Q}= \begin{bmatrix}\bm{Q}_1 & \bm{Q}_2 \\\bm{Q}_3 & \bm{Q}_4\end{bmatrix} \in \mathbb{R}^{(n+m+k)\times(n+m+k)}
    \end{equation} 
    be the grounded Laplacian matrix of the electrical network which can be calculated by $\bm{Q}_{ij} = -B_{ij}(i \ne j)$ and $\bm{Q}_{ii} = \sum_{\substack{j=1, j \ne i}}^{n+m+k} B_{ij} + B_{i,n+m+k+1}.$ By performing Kron reduction, we eliminate the interior nodes and obtain the Kron-reduced Laplacian matrix as
    \begin{equation} \label{eq:kron-reduction}
    \bm{Q}_{\mathrm{red}} = \bm{Q}_1 - \bm{Q}_2 \bm{Q}_4^{-1} \bm{Q}_3,
     \end{equation} 
    where $\bm{Q}_1 \in \mathbb{R}^{(m+n)\times(m+n)},\bm{Q}_2 \in \mathbb{R}^{(m+n)\times k}, \bm{Q}_3 \in \mathbb{R}^{k\times(m+n)},\bm{Q}_4 \in \mathbb{R}^{k\times k}.$ Combining~\eqref{eq:line-dynamics} and~\eqref{eq:kron-reduction}, the  Kron-reduced network dynamics can be expressed as:
   \begin{equation}
   \Delta \bm{I} = \bm{Y}_{\text{Grid}}(s) \Delta \bm{U}:=[(\bm{Q}_{\text{red}} \otimes R(s)] \Delta \bm{U},
   \label{eq:kron-reduced-network-admittance}
   \end{equation}
   where \(\Delta\bm I:=[\Delta I_{x,1},\Delta I_{y,1},\ldots, \Delta I_{x,n+m},\Delta I_{y,n+m}]^{\top}\) is the stacked current injection vector of the converters, and \(\Delta\bm U:=[\Delta U_{x,1},\Delta U_{y,1},\ldots, \Delta U_{x,n+m},\Delta U_{y,n+m}]^{\top}\) is the corresponding terminal voltage vector. The subscripts \(x\) and \(y\) refer to the global \(xy\) reference frame, and \(\otimes\) denotes the Kronecker product.
    \begin{figure}[!t]
	\centering
	\includegraphics[width=3.3in]{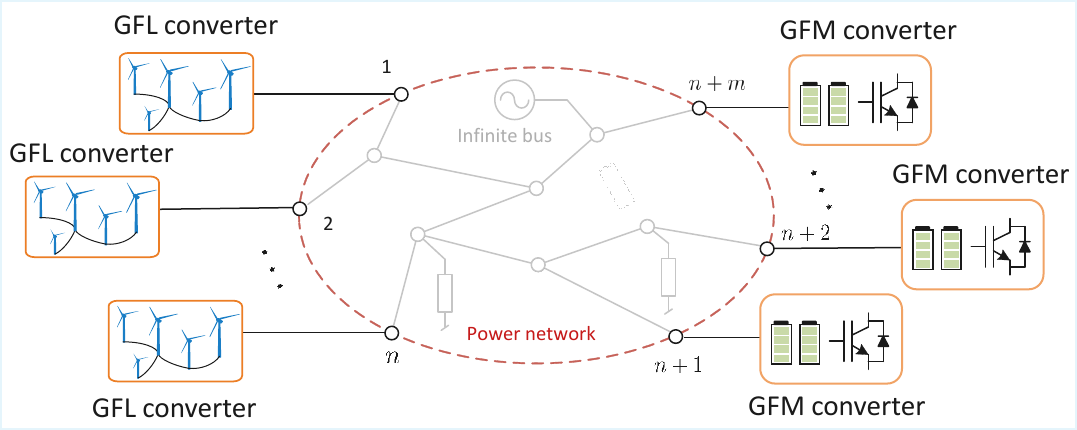}
	\vspace{-1mm}
	\caption{Illustration of a converter-dominated power system.}
	\vspace{-1mm}
	\label{fig:hybrid-system-topology}
    \end{figure}

    \begin{figure}[!t]
    	\centering
        \vspace{0.5mm}
    	\includegraphics[width=3.0in]{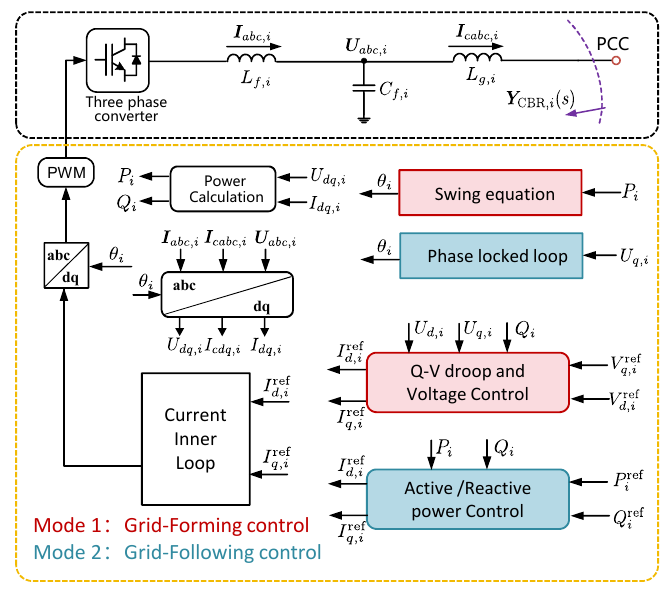}
    	\vspace{-2mm}
    	\caption{A grid-connected three-phase power converter.  Mode 1:  GFM control. Mode 2: GFL control.}
    	\vspace{-3mm}
	\label{fig:converter-control-structure}
      \end{figure}
      
   We next investigate the dynamics of the converters. Fig. \ref{fig:converter-control-structure} shows a three-phase converter connected to the ac grid through an LCL filter. The converter can be operated in GFL mode or GFM mode. 
   The linearized model of the $i$-th converter-based resource (CBR) is represented by a $2\times2$ admittance (transfer function) matrix $\bm{Y}_{\text{CBR},i}(s)$ in the global $xy$-frame:
   \begin{equation}
   -
   \begin{bmatrix}
   \Delta I_{x,i} \\
   \Delta I_{y,i}
    \end{bmatrix}=
    S_{i}\bm{Y}_{\text{CBR},i}(s)
    \begin{bmatrix}
    \Delta U_{x,i} \\
    \Delta U_{y,i}
    \end{bmatrix},
    \label{eq:converter-admittance}
    \end{equation} 
      where $\begin{bmatrix}\Delta I_{x,i} , \Delta I_{y,i}\end{bmatrix}^\mathrm{T}$ and $\begin{bmatrix}\Delta U_{x,i} , \Delta U_{y,i} \end{bmatrix}^\mathrm{T}$ denote the perturbations in the current output and terminal voltage of Converter \(i\) in the global and $S_{i}$ is the capacity ratio of the $i$-th node's rated capacity to the base capacity of per-unit calculation. The detailed derivation of such admittance matrix has been well studied in the literature, e.g., \cite{yang2021placing} and~\cite{huang2024gain}. Then, we extend \eqref{eq:converter-admittance} to include the dynamics of all converters:
    \begin{equation} \label{eq:aggregated-converter-admittance}
    \Delta \bm{I}_{xy}=-(\bm{S}_{\rm{B}}\otimes I_{2}) \bm{Y}_{\rm{CBR}}(s) \Delta \bm{U}_{xy},
    \end{equation}
    where $\bm{S}_{\rm{B}} = \text{diag}\{S_{1},\cdots S_{n+m}\}$ is the capacity ratio matrix; $\bm{Y}_{\rm{CBR}}(s)=\text{diag}\{\bm{Y}_{\text{CBR},1}(s),\cdots,\bm{Y}_{\text{CBR},n+m}(s)\}$ is block-diagonal and represents the dynamics of all converters; \(\operatorname{diag}\{\cdot\}\) denotes a block-diagonal matrix constructed from its arguments, and \(I_k\) denotes the \(k\times k\) identity matrix.
    Note that here we ignore the static angle differences of the converters, because they will not affect the system-level decentralized stability analysis, as proved in~\cite{huang2024gain}.
   
    Combining the converter dynamics in \eqref{eq:aggregated-converter-admittance} and the power network dynamics in \eqref{eq:kron-reduced-network-admittance}, the closed-loop interconnection of the converters and the power network can be written as
    \begin{equation} \label{eq:hybrid-system-characteristic} 
    (\bm{S}_{\rm B}\otimes I_{2})\bm{Y}_{\mathrm{CBR}}(s) \#\bm{Y}_{\mathrm{Grid}}^{-1}(s),
    \end{equation}
    where \(\#\) denotes the feedback interconnection.
     
    To simplify the network representation, we rescale the converter and network dynamics following~\cite{huang2024gain}. The resulting equivalent closed-loop interconnection is
    \begin{equation} \label{eq:rescaled-interconnection}
        \widetilde{\bm{Y}}_{\text{CBR}}(s) \#  \widetilde{\bm{Y}}_{\rm Grid}^{-1}(s),
    \end{equation}
    where $\widetilde{\bm Y}_{\rm Grid}(j\omega) = {\bm S}^{-\frac{1}{2}}_{\rm{B}} {\bm Q}_{\rm red} {\bm S}^{-\frac{1}{2}}_{\rm{B}} \otimes I_2$ and the $i$-th block of $\widetilde{\bm Y}_{{\rm CBR}}(s)$ is denoted by $\widetilde{\bm Y}_{{\rm CBR},i}(s)$, which captures the dynamics of the \text{$i$-th} converter as
    \begin{equation}\label{eq:rescaled-converter-admittance}
        \widetilde{\bm Y}_{{\rm CBR},i}(s) := {\bm Y}_{{\rm CBR},i}(s){R}^{-1}(s) ,\; i \in \{1,..., n+m\}.
    \end{equation} 
    This rescaling absorbs the nductance rotation matrix $R(s)$ into the converter admittance, so that the rescaled network matrix becomes a constant positive-definite matrix.

    \vspace{-3mm}

     \subsection{Equivalent Network Reformulation}
     
     Under the conventional partition in \eqref{eq:rescaled-interconnection}, all converter admittances remain on the converter side, whereas \(\widetilde{\bm Y}_{\rm Grid}(s)\) represents the network. The stability assessment of each GFL converter therefore uses only \(\widetilde{\bm Y}_{\rm Grid}(s)\), and the contribution of the GFM converters is not incorporated into this assessment.
     
     To quantify the stability support provided by the GFM devices, we fuse their dynamics into the network. As illustrated in Fig.~\ref{fig:dynamic-repartitioning}, this re-partitioning yields a reduced interconnection between the GFL converters and an equivalent network. The resulting characteristic equation is obtained below. Firstly, we rewrite \eqref{eq:rescaled-interconnection} according to the GFL and GFM nodes:
    \begin{equation} \label{eq:partitioned-hybrid-characteristic}
    \det \left(\begin{bmatrix}
    \widetilde{\bm{Y}}_{\rm{GFL}}(s) &  \\
    & \hspace{-3mm} \widetilde{\bm{Y}}_{\rm{GFM}}(s)
    \end{bmatrix} +\begin{bmatrix}
    \bm{Q}_{A} &   \bm{Q}_{B} \\
    \bm{Q}_{C} &   \bm{Q}_{D}
    \end{bmatrix} \otimes I_{2}\right) =0,
    \end{equation}
    where $\det(\cdot)$ denotes the determinant; $\bm Q_A$, $\bm Q_B$, $\bm Q_C$, and $\bm Q_D$ are obtained by partitioning ${\bm S}_{\rm B}^{-1/2}{\bm Q}_{\rm red}{\bm S}_{\rm B}^{-1/2}$ according to the GFL and GFM nodes. Their dimensions are $n\times n$, $n\times m$, $m\times n$, and $m\times m$, respectively. Likewise, $\widetilde{\bm Y}_{\mathrm{GFL}}(s)$ and $\widetilde{\bm Y}_{\mathrm{GFM}}(s)$ are the GFL and GFM blocks of $\widetilde{\bm Y}_{\mathrm{CBR}}(s)$.
    
    Then, by applying the Schur complement, Eq.~\eqref{eq:partitioned-hybrid-characteristic} can be equivalently expressed as
    \begin{equation} \label{eq:schur-complement-factorization}
    \begin{aligned}
     &\det(\bm{C}(s))\, \det \Big( \widetilde{\bm{Y}}_{\mathrm{GridC}}
     + \widetilde{\bm{Y}}_{\rm{GFL}}(s) \Big)=0,
    \\
    &\,\bm{C}(s) =\widetilde{\bm{Y}}_{\rm{GFM}}(s) + \bm{Q}_{D} \otimes I_{2}    ,
    \end{aligned}
    \end{equation}
    where $\widetilde{\bm{Y}}_{\mathrm{GridC}}(s)$ characterizes the equivalent network dynamics, including the network dynamics and GFM dynamics:
     \begin{equation}
\label{eq:equivalent network-admittance}
\resizebox{0.98\linewidth}{!}{$
\begin{aligned}
\widetilde{\bm{Y}}_{\mathrm{GridC}}(s)
=&\;
\bm{Q}_A \otimes I_2
-
(\bm{Q}_B\bm{Q}_{D}^{-1/2}\otimes I_2)
\bm{S}_{\mathrm{GFM}}(s)
(\bm{Q}_D^{-1/2}\bm{Q}_C\otimes I_2),
\\
\bm{S}_{\mathrm{GFM}}(s)
=&\;
\left[
I_{2m}
+
(\bm{Q}_{D}^{-1/2}\otimes I_{2})
\widetilde{\bm{Y}}_{\mathrm{GFM}}(s)
(\bm{Q}_{D}^{-1/2}\otimes I_{2})
\right]^{-1}. 
\end{aligned}
$}
\end{equation}
     
      The above reformulations~\eqref{eq:schur-complement-factorization} and~\eqref{eq:equivalent network-admittance} fuse the GFM dynamics into the network representation. The resulting matrix $\widetilde{\bm Y}_{\mathrm{GridC}}(s)$ is the network admittance seen by the GFL converters and includes both the original network dynamics and the embedded GFM dynamics. Eq.~\eqref{eq:schur-complement-factorization} decomposes the characteristic equation into two parts. The part $\det(\bm C(s))$ describes the GFM subsystem, in which the $m$ GFM converters are interconnected through the network submatrix $\bm Q_D$. The remaining part $\det ( \widetilde{\bm{Y}}_{\mathrm{GridC}}
     + \widetilde{\bm{Y}}_{\rm{GFL}}(s) )$ characterizes the interaction between the GFL converters and the equivalent network (embedding GFM converters), which is the focus of this paper. 

     \begin{remark}
     The subsystem associated with $\bm C(s)$ is assumed to be stable, that is, $\det(\bm C(s))=0$ has no poles in the closed right-half plane. This assumption requires that GFM converters should be stable when operating in parallel, and it is often considered during the design phase of GFM converters. This paper will not focus on this problem since we aim at quantifying the impact of GFM converters at the system level.
     \end{remark}

     Under the above assumption, the poles of the closed-loop system that are associated with the GFL-GFM and GFL-network interactions should satisfy
     \begin{equation}
      \label{eq:reduced-characteristic}
       \det\left(
         \widetilde{\bm{Y}}_{\mathrm{GridC}}(s)
          +
         \widetilde{\bm{Y}}_{\mathrm{GFL}}(s)
       \right)=0,
      \end{equation}
     which describes the interconnection $\widetilde{\bm Y}_{\rm GFL}(s)\# \widetilde{\bm Y}_{\rm GridC}^{-1}(s)$ and the stability is governed by the coupled dynamics of the GFL converters, the network, and the GFM converters. Direct analysis requires constructing the high-dimensional equivalent network admittance matrix and evaluating its interaction with all GFL converters. However, its complexity grows with the system size and number of GFM devices, motivating a modular and scalable assessment based on condensed local information.

    \begin{figure}[!t]
    	\centering
        \vspace{0mm}
    	\includegraphics[width=3.0in]{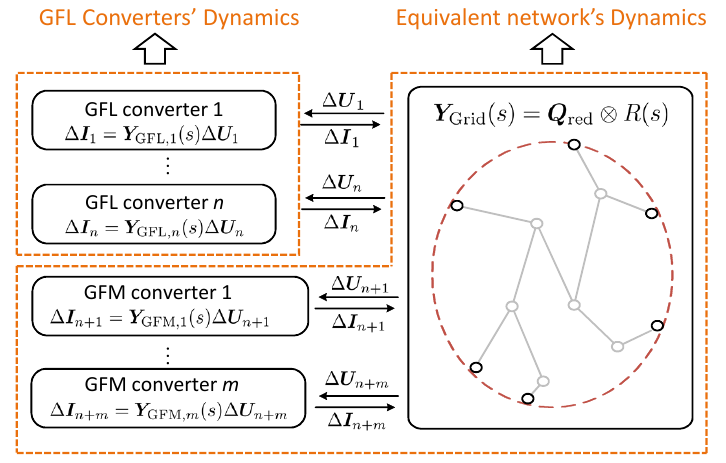}
    	\vspace{-1mm}
    	\caption{Illustration of the re-partitioning of network dynamics and converters' dynamics, where the GFM converter dynamics are fused into the network.}
    	\vspace{-3mm}
	\label{fig:dynamic-repartitioning}
      \end{figure}
      
    \section{Geometric Decentralized Stability Conditions Based on DW Shells}\label{sec:geometric-conditions}

    In this section, we introduce the DW shell and its projections, including numerical range and $x$-$z$ graph. Building on these geometric concepts, we then present decentralized stability conditions for feedback interconnections.

   \subsection{Geometric Characterization of Complex Matrices}

     The Davis-Wielandt (DW) shell provides a geometric description of a complex matrix~\cite{wielandt1955eigenvalues,davis1968shell,zhang2025phantom,lestas_DW,Li_DW}, and we briefly introduce below how it can be used in stability analysis.
     For $A \in \mathbb{C}^{n\times n}$, its DW shell is defined by
     \begin{equation}\label{eq:DW shell-definition} DW(A)=\{(x^*Ax,\|Ax\|^2): x\in \mathbb{C}^n, \|x\|=1\},
     \end{equation}
     where $x^*$ is the conjugate transpose of $x$ and $\|\cdot\|$ is the two-norm. As illustrated in Fig.~\ref{fig:DW shell-projections}, the projection of $DW(A)$ onto the $x$--$y$ plane is the \textbf{numerical range}
    \begin{equation}\label{eq:numerical-range-definition}
      W(A)=\{x^*Ax: x\in \mathbb{C}^n, \|x\|=1\}.
     \end{equation}

     For a complex scalar \(a\), its gain and phase are uniquely given by
     \(|a|\) and \(\angle a\), respectively. For a complex matrix \(A\in\mathbb{C}^{n\times n}\), the \textbf{gain} can be characterized by the interval between its minimum and maximum singular values
    \begin{equation} 
    \sigma(A) = \left[ \sigma_{\min}(A),\, \sigma_{\max}(A) \right],
    \end{equation}
     where \(\sigma_{\min}(A)\) and \(\sigma_{\max}(A)\) denote the minimum and maximum singular values of \(A\), respectively. If \(0\notin W(A)\), the matrix \(A\) is said to be sectorial, and its \textbf{phase} can be characterized by the arguments of the numerical range~\cite{chen2024phase}
     \begin{equation}
     \phi(A)= \left\{ \angle z\mid z\in W(A) \right\}. \end{equation}
     For a sectorial matrix, this phase set is the interval \(\phi(A)=[\phi_{\min}(A),\phi_{\max}(A)]\), where \(\phi_{\min}(A)\) and \(\phi_{\max}(A)\) are determined by the two supporting rays of \(W(A)\) with respect to the positive real axis~\cite{Wang2024phase}.
    
     In~\cite{huang2025DW}, the projection of the DW shell onto the $x$–$z$ plane is defined as the $\bf x$–$\bf z$ \textbf{graph} $P(A)$, which enables complementary stability conditions with the numerical range, denoted by
     \begin{equation}\label{eq:xz-graph-definition}
     P(A)=\{(\Re(x^*Ax),\|Ax\|^2): x\in \mathbb{C}^n, \|x\|=1\},
     \end{equation}
     where \(\Re(\cdot)\) is the real part of a complex number. We next introduce the basic stability conditions enabled by the above concepts, which are essential for the subsequent analysis of heterogeneous GFM converters.

     \vspace{-1mm}
    
    \subsection{Geometric Stability Conditions}

    \begin{figure}[!t]
	\centering
	\includegraphics[width=3.1in]{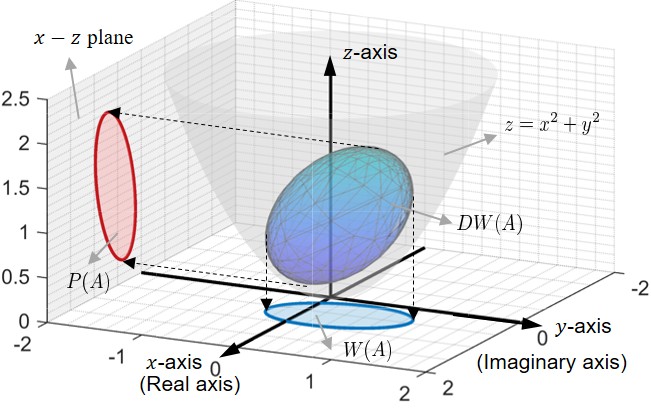}
	\vspace{-2mm}
	\caption{The DW shell $DW(A)$ and its projections onto the $x$-$y$ and $x$-$z$ planes, which are numerical range and $x$-$z$ graph, respectively.}
	\vspace{2mm}
	\label{fig:DW shell-projections}
    \end{figure}

     \setlength{\unitlength}{1mm}
    \begin{figure}[!t]
    \begin{center}
    \begin{picture}(50,20)
    \thicklines 
    \put(0,17){\vector(1,0){8}} \put(10,17){\circle{4}}
    \put(12,17){\vector(1,0){8}} \put(20,13){\framebox(10,8){$G(s)$}}
    \put(30,17){\line(1,0){10}} \put(40,17){\vector(0,-1){10}}
    \put(38,5){\vector(-1,0){8}} \put(40,5){\circle{4}}
    \put(50,5){\vector(-1,0){8}} \put(20,1){\framebox(10,8){$H(s)$}}
    \put(20,5){\line(-1,0){10}} \put(10,5){\vector(0,1){10}}
    \put(5,10){\makebox(5,5){$y_1$}} \put(40,10){\makebox(5,5){$y_2$}}
    \put(0,17){\makebox(5,5){$w_1$}} \put(45,0){\makebox(5,5){$w_2$}}
    \put(13,17){\makebox(5,5){$u_1$}} \put(32,0){\makebox(5,5){$u_2$}}
    \put(10,10){\makebox(6,10){$-$}}
    \end{picture}
    \vspace{-2mm}
    \caption{A standard closed-loop system $G(s)\#H(s)$.}
    \vspace{-5mm}
    \label{fig:feedback-interconnection}
    \end{center}
    \end{figure}

    Consider the feedback interconnection \(G(s)\#H(s)\) shown in Fig.~\ref{fig:feedback-interconnection}, where \(G,H\in\mathcal{RH}_{\infty}^{m\times m}\) are real, rational, proper, and stable transfer function matrices. The following DW shell separation condition provides a geometric interpretation of closed-loop stability.
    
    \begin{lemma}[Geometric Stability Condition based on DW Shell Separation~\cite{huang2025DW,feng2025unified}]
     \label{thm:geometric-centralized-stability} 
     The closed-loop system \(G(s)\#H(s)\) is stable if, for each \(\omega\in[0,\infty)\),
     \begin{equation}
     DW\bigl(G(j\omega)\bigr)
      \cap
      DW\!\left(
      -\tfrac{1}{\tau}H^{-1}(j\omega)
       \right)
          =
       \varnothing,
        \forall \tau\in(0,1],
      \label{eq:centralized-DW shell-separation}
      \end{equation}
      that is, the DW shell of \(G(j\omega)\) is separated from the DW shell of
    \(-H^{-1}(j\omega)\) scaled by \(1/\tau\).
      \end{lemma}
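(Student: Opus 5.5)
The approach is a homotopy argument combined with the multivariable Nyquist criterion. Since $G,H\in\mathcal{RH}_\infty^{m\times m}$ are stable, closed-loop stability of $G\#H$ is equivalent to the return difference $\det(I+G(s)H(s))$, equivalently $\det(H^{-1}(s)+G(s))$ wherever $H^{-1}$ exists, having no zeros in the closed right-half plane. We embed the loop into the family $G\#(\tau H)$, $\tau\in[0,1]$. At $\tau=0$ the loop is open and trivially stable, since $\det(I)=1$ has no encirclements. By continuity of the generalized Nyquist locus in $\tau$, the winding number of $\det(I+\tau G(j\omega)H(j\omega))$ around the origin can change only if, for some $\tau\in(0,1]$ and some $\omega\in[0,\infty]$, the determinant vanishes on the imaginary axis. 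Negative frequencies are covered by conjugate symmetry of real-rational transfer functions, and $\omega=\infty$ by properness together with a limiting or continuity argument. It therefore suffices to show that \eqref{eq:centralized-DW shell-separation} rules out $\det(I+\tau G(j\omega)H(j\omega))=0$ for all such $\tau,\omega$.

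\textbf{Key step.} Suppose $\det(I+\tau G H)=0$ at some $j\omega$, and write $G=G(j\omega)$, $H=H(j\omega)$. Then there is $v\neq 0$ with $v=-\tau G H v$. Put $u=Hv\neq 0$, so that $G u=-\tfrac{1}{\tau}v$. If $H$ is invertible at $j\omega$, then $v=H^{-1}u$, hence $Gu=-\tfrac{1}{\tau}H^{-1}u$. Let $x=u/\|u\|$. Then
\[
x^*Gx=x^*\bigl(-\tfrac{1}{\tau}H^{-1}\bigr)x,\qquad \|Gx\|^2=\bigl\|-\tfrac{1}{\tau}H^{-1}x\bigr\|^2,
\]
so the same point lies in both $DW(G(j\omega))$ and $DW(-\tfrac1\tau H^{-1}(j\omega))$, which contradicts the separation \eqref{eq:centralized-DW shell-separation}. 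This step is simply the observation that a common eigen-direction of the two operators produces a common point of their DW shells, because the DW shell records exactly the pair (Rayleigh quotient, squared image norm) for a unit vector.

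\textbf{Main obstacle.} The delicate points lie in the bookkeeping rather than in this inclusion.

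The first is frequencies where $H(j\omega)$ is singular, so that $-\tfrac1\tau H^{-1}$ is undefined. I would interpret the condition in the extended (relational) sense, with the DW shell of an operator relation $\{(x^*w,\|w\|^2): (x,w)\in\text{graph},\|x\|=1\}$. This is consistent with the SRG/DW treatments in \cite{huang2025DW,feng2025unified}. The same computation then goes through with $w=-\tfrac1\tau v$ paired with $u$, where $Hv=u$.

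The second is the endpoint $\omega=\infty$ and the need for the winding-number homotopy to be legitimate. Here I would rely on properness, so that $G(j\omega)$ and $H(j\omega)$ converge, and on the fact that the separation is assumed on the closed frequency range. Together these guarantee that $\det(I+\tau GH)$ stays bounded away from zero uniformly on the compactified imaginary axis for each $\tau$. Continuity in $\tau$ then fixes the encirclement count at zero, which yields internal stability via the generalized Nyquist criterion.
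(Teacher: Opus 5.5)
The core of your argument is sound. The paper gives no proof of this lemma and defers to the cited works, and your argument is the standard one: run the homotopy $G\#(\tau H)$ for $\tau\in[0,1]$, and note that a singular $I+\tau G(j\omega)H(j\omega)$ produces a unit vector $x$ with $G(j\omega)x=-\tfrac{1}{\tau}H^{-1}(j\omega)x$, which is a common point of the two DW shells. Your handling of negative frequencies and of $\tau=0$ is also fine.

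\textbf{The gap is at $\omega=\infty$.} You write that ``the separation is assumed on the closed frequency range,'' and that together with properness this keeps $\det(I+\tau GH)$ away from zero on the compactified imaginary axis. But the hypothesis is imposed only for $\omega\in[0,\infty)$. Properness gives you convergence of $G(j\omega)$ and $H(j\omega)$, but limits of disjoint sets need not be disjoint. So nothing rules out $\det(I+\tau_0 G(\infty)H(\infty))=0$ for some $\tau_0\in(0,1]$. At such a $\tau_0$ a closed-loop pole can enter the right-half plane through $s=\infty$, and the winding-number homotopy breaks.

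This cannot be repaired by a limiting argument. Take $m=1$, $G(s)=-2(s+1)/(s+4)$ and $H(s)=1$.
\begin{itemize}
\item For $\omega>0$ the imaginary part of $G(j\omega)$ is $-6\omega/(16+\omega^2)\neq 0$, and $G(0)=-\tfrac12$. Hence $G(j\omega)\notin(-\infty,-1]$ for every $\omega\in[0,\infty)$.
\item The DW shell of a scalar $a$ is the single point $(a,|a|^2)$. So the hypothesis holds for all $\omega\in[0,\infty)$ and all $\tau\in(0,1]$.
\item Yet $1+G(s)H(s)=(2-s)/(s+4)$, so the closed loop has a pole at $s=2$.
\item The violation sits exactly at $\omega=\infty$, where $G(\infty)=-2=-1/\tau$ with $\tau=\tfrac12$.
\end{itemize}

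So the endpoint condition cannot be derived from the hypothesis; it must be assumed. Either require the separation also at $\omega=\infty$, or impose the well-posedness condition $\det(I+\tau G(\infty)H(\infty))\neq 0$ for all $\tau\in(0,1]$. The latter holds automatically if, for example, $G$ or $H$ is strictly proper. In effect the lemma should be read on $[0,\infty]$.

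With that addition, your key step gives nonvanishing of $\det(I+\tau G(j\omega)H(j\omega))$ on $[-\infty,\infty]\times[0,1]$. Compactness then gives the uniform bound, and the rest of your proof goes through unchanged.
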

    When $G(s)$ is block-diagonal (aligned with the setting of $\widetilde{\bm{Y}}_{\text{CBR}}(s)$ in~\eqref{eq:rescaled-interconnection}), i.e., \(G(s)=\operatorname{diag}\{G_1(s),\ldots,G_N(s)\}\), its DW shell is the convex hull of all the DW shells of the individual blocks \(G_{i}(j\omega)\). This block-diagonal structure admits the following decentralized stability conditions.
    \begin{lemma}[Geometric Decentralized Stability Condition]
    \label{thm:geometric-decentralized-stability}
    Consider $G$, $H \in \mathcal{RH}_{\infty}^{m\times m}$, where $G(s)$ is block-diagonal. The closed-loop system \( G(s)\#H(s) \) is stable if, for each \( \omega \in [0,\infty) \),
    \textbf{either}
    \begin{enumerate}[1)]
        \item \label{cond:gain}
        the decentralized \textbf{gain} condition, i.e.,
        \begin{equation}\label{eq:gain-condition}
        \max_i \, \sigma_{\max}(G_i(j\omega)) 
        < \sigma_{\min}(H^{-1}(j\omega))
        \  \quad{\rm \textit{holds, \textbf{or}}}
        \end{equation}
        
        \item \label{cond:phase}
        the decentralized \textbf{phase} condition, i.e.,
        \begin{equation}\label{eq:phase-condition}
        \begin{cases}
        \textit{a) } \max\limits_i \, \phi_{\max}(G_i(j\omega)) < \pi - \phi_{\max}(H(j\omega)), \\[2pt]
        \textit{b) } \min\limits_i \, \phi_{\min}(G_i(j\omega)) > -\pi - \phi_{\min}(H(j\omega)), \\[2pt]
        \textit{and c) } \max\limits_i \, \phi_{\max}(G_i(j\omega)) 
        - \min\limits_i \, \phi_{\min}(G_i(j\omega)) < \pi,
        \end{cases}
        \vspace{-2mm}
        \end{equation}
        \textit{holds, \textbf{or}}
        
        \item \label{cond:xz-separation}
        $H(j\omega)=H^{*}(j\omega)\succ0$ and the decentralized $x$-$z$
        \textbf{graph separation} condition holds, i.e., for each
        $i$,
        \begin{equation}
        \label{eq:xz-separation-condition}
        P(G_i(j\omega)) \cap
        P\!\left(-\tfrac{1}{\tau}H^{-1}(j\omega)\right)
        =\varnothing,
        \quad \forall \tau\in(0,1].
        \end{equation}

    \end{enumerate}
    \end{lemma}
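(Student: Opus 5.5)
Each of the three conditions will be reduced to the DW-shell separation condition of Lemma~\ref{thm:geometric-centralized-stability}, applied at each fixed $\omega$. We write $G=G(j\omega)=\operatorname{diag}\{G_i\}$ and $H=H(j\omega)$. Two facts will be used throughout. First, as noted before the lemma, $DW(G)=\operatorname{conv}\bigcup_i DW(G_i)$. Its projections $W(G)$ and $P(G)$ are therefore the convex hulls of the corresponding projections of the blocks. Second, $DW(-\tfrac1\tau H^{-1})$ is obtained from $DW(H^{-1})$ by the map $(z,r)\mapsto(-z/\tau,\,r/\tau^2)$. Separation of two sets follows from separation of any of their projections, so it suffices to find, for every $\tau\in(0,1]$, a projection under which the two shells are disjoint.

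\emph{Gain condition.} Project onto the $z$-coordinate. For $(\cdot,r)\in DW(G)$ we have $r=\|Gx\|^2\le\max_i\sigma_{\max}^2(G_i)$, because the maximum singular value of a block-diagonal matrix is the maximum over its blocks. On the other side, $r=\|\tfrac1\tau H^{-1}y\|^2\ge \tfrac{1}{\tau^2}\sigma_{\min}^2(H^{-1})\ge\sigma_{\min}^2(H^{-1})$ since $\tau\le1$. The strict inequality \eqref{eq:gain-condition} then makes the two $z$-intervals disjoint.

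\emph{Phase condition.} Project onto the numerical range. Condition c) together with convexity gives $W(G)\subset\operatorname{conv}\bigcup_i W(G_i)$, which lies in a closed sector of opening less than $\pi$ with angles in $[\min_i\phi_{\min}(G_i),\max_i\phi_{\max}(G_i)]$. Hence $0\notin W(G)$. Next, $W(H^{-1})$ is related to $W(H)$ through $y=Hx$, which gives $y^*H^{-1}y=x^*H^*x=\overline{x^*Hx}$. So $\phi(H^{-1})=-\phi(H)$, and $W(-\tfrac1\tau H^{-1})$ lies in the sector $[\pi-\phi_{\max}(H),\,\pi-\phi_{\min}(H)]$, with $\tau$ affecting only the modulus. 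Conditions a) and b) say exactly that the two angular sectors do not overlap modulo $2\pi$, so the numerical ranges are disjoint. This relies implicitly on $H$ being sectorial; we take that as implicit in the definition of $\phi(H)$.

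\emph{$x$-$z$ separation.} This is the step I expect to be the main obstacle, because $P(G)$ is the convex hull of the $P(G_i)$, and blockwise disjointness does not automatically pass to the hull. The plan is to exploit $H=H^*\succ0$. In that case $-\tfrac1\tau H^{-1}$ is normal Hermitian, and $P(-\tfrac1\tau H^{-1})$ is the convex region bounded above by the chord through the points $(-\lambda/\tau,\lambda^2/\tau^2)$ for the extreme eigenvalues $\lambda$ of $H^{-1}$, and below by the parabola $z=x^2$. Taking the union over $\tau\in(0,1]$ sweeps this set toward the left and upward, so it becomes upward/leftward-closed in a suitable sense. I would then show that $\bigcup_{\tau}P(-\tfrac1\tau H^{-1})$ is itself convex, or at least is separated from each $P(G_i)$ by a single common line $z=\alpha x+\beta$ that does not depend on $i$. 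The candidate is the line through the lower-right boundary, derived from the $\tau=1$ chord together with the parabola. Once each $P(G_i)$ lies strictly on the other side of this common line, so does their convex hull $P(G)$, and Lemma~\ref{thm:geometric-centralized-stability} applies. Verifying the existence of a common separating line, equivalently the convexity of the swept set, is the delicate point. If it fails directly, I would fall back to separating, for each fixed $\tau$, by the supporting line of the convex set $P(-\tfrac1\tau H^{-1})$ and arguing that this line works uniformly in $i$.
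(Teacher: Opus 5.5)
Your reductions for the gain and phase conditions are sound. Each one certifies, at the given $\omega$, the DW-shell separation of Lemma~\ref{thm:geometric-centralized-stability} through a projection, and mixing conditions across frequencies is harmless. One small repair: a closed sector contains the origin, so $0\notin W(G)$ should come from a functional such as $z\mapsto\Re(e^{-j\psi}z)$, with $\psi$ the bisecting angle. This functional is strictly positive on every $W(G_i)$, hence on their convex hull.

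The genuine gap is condition~3), where passing from the blocks to their convex hull is delicate (the paper itself only refers to Theorem~4.7 of the cited DW-shell work), and your main plan for it fails. Put $a=\lambda_{\min}(H^{-1})$, $b=\lambda_{\max}(H^{-1})$, $s=1/\tau\ge1$ and $\ell_s(x):=-(a+b)s\,x-ab\,s^2$. Then $P(-sH^{-1})$ is the convex hull of the points $(-s\lambda_k,s^2\lambda_k^2)$, not the full region down to the parabola. It lies on or below the line $z=\ell_s(x)$, which carries its upper edge. Moreover $\kappa x^2-\ell_s(x)=(\tfrac{a+b}{2\sqrt{ab}}x+\sqrt{ab}\,s)^2\ge0$ with $\kappa=\tfrac{(a+b)^2}{4ab}$. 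Hence the swept set $\mathcal U=\bigcup_{\tau\in(0,1]}P(-\tfrac1\tau H^{-1})$ contains every $(x,x^2)$ with $x\le-a$ but lies below $z=\kappa x^2$; for $x\le-2ab/(a+b)$ it is exactly the horn $x^2\le z\le\kappa x^2$. So $\mathcal U$ is neither convex nor upward closed.

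No $\tau$-independent line can work either. The single block $G_i=cI_2$ with $\Re c=x_0\le-a$ and $|c|^2>\kappa x_0^2$ satisfies \eqref{eq:xz-separation-condition}. Yet any line $z=\alpha x+\beta$ with $\mathcal U$ on one side must have $\mathcal U$ above it, because of the points $(x,x^2)$ as $x\to-\infty$. Then $P(G_i)=\{(x_0,|c|^2)\}$, lying directly above $(x_0,x_0^2)\in\mathcal U$, is on the same side.

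The missing idea is to let the line depend on $\tau$, take it to be $z=\ell_s(x)$ itself, and exploit the sweep over $\tau$. A point with $z\ge x^2$ on or below this line has $t:=-x\in[as,bs]$ and $0\le z-x^2\le f(s)$, where $f(s'):=(t-as')(bs'-t)$. Since $f(t/a)=0$, the intermediate value theorem gives $s'\in[s,t/a]$ with $z-x^2=f(s')$. As $as'\le t\le bs\le bs'$, the point then lies on the upper edge of $P(-s'H^{-1})$, where $1/s'\in(0,\tau]$. Each $P(G_i)$ lies in $\{z\ge x^2\}$, because Cauchy--Schwarz gives $\|G_ix\|^2\ge(\Re x^*G_ix)^2$, and it avoids all these sets. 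So each $P(G_i)$ lies strictly above the line. An open half-plane is convex, so $P(G)=\operatorname{conv}\bigcup_iP(G_i)$ does too, whereas $P(-\tfrac1\tau H^{-1})$ lies on or below it. This gives the separation for every $\tau\in(0,1]$, and Lemma~\ref{thm:geometric-centralized-stability} concludes.

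Put differently, what is convex is the complement of $\mathcal U$ inside the epigraph $\{z\ge x^2\}$, not $\mathcal U$ itself. Your fallback (a per-$\tau$ supporting line that works uniformly in $i$) has the right shape. But without identifying this particular line and proving the sweep property, it only restates the claim. This is exactly where the hypothesis ``for all $\tau\in(0,1]$'' and the Hermitian structure of $H(j\omega)$ are used.
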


    The proof of Lemma~\ref{thm:geometric-decentralized-stability} is similar to the proof of Theorem~4.7 in~\cite{huang2025DW}, while here we consider a more general case in~\eqref{eq:xz-separation-condition} where $H$ is not necessarily a constant matrix. Note that Lemma~\ref{thm:geometric-decentralized-stability} can already be applied to analyze the stability of $\widetilde{\bm{Y}}_{\text{CBR}}(s) \#  \widetilde{\bm{Y}}_{\rm Grid}^{-1}(s)$, since $\widetilde{\bm{Y}}_{\text{CBR}}(s)$ is block-diagonal and $\widetilde{\bm{Y}}_{\rm Grid}^{-1}(s)$ is positive definite. However, To quantify the stabilizing effect of heterogeneous GFM converters, we focus on the interconnection \(\widetilde{\bm Y}_{\rm GFL}(s)\#\widetilde{\bm Y}_{\rm GridC}^{-1}(s)\) in \eqref{eq:reduced-characteristic}. The next section characterizes how the GFM dynamics reshape the DW shell of \(\widetilde{\bm Y}_{\rm GridC}(s)\).


    \section{DW Shell of the Equivalent Network and Impact of GFM Dynamics} \label{sec:gfm-support-analysis}
 
    This section characterizes the equivalent network through its DW shell. We first analyze the real axis and imaginary axis projections of the DW shell. Then, the local passivity index and the imaginary-axis index of the GFM devices are defined to derive certified bounds on these projections. The resulting bounds constitute an envelope enabling the stability certificates that account for the heterogeneous GFM dynamics.
    
    \subsection{An Envelope for Equivalent Network's DW Shell}
   
     The exact DW shell of the equivalent network can be difficult to obtain when heterogeneous GFM converters are considered, as seen from the complicated computation of \(\widetilde{\bm Y}_{\mathrm{GridC}}(j\omega)\) in~\eqref{eq:equivalent network-admittance}. We therefore derive bounds of the DW shell to simplify the computation and analysis through the coordinate projections of DW shell.
     For a complex matrix \( A\), let \(\mathcal H( A):=(A+A^*)/2\) and \(\mathcal K(A):=(A- A^*)/(2j)\) denote its Hermitian part and its imaginary Hermitian part, respectively. 
     At a given frequency, the minimum value of \(DW(\widetilde{\bm Y}_{\mathrm{GridC}}(j\omega))\) in the real coordinate is determined by the Hermitian part of the equivalent network admittance
     \begin{equation}\label{eq:alpha}
         \alpha(j\omega):=
    \lambda_{\min} (\mathcal H(\widetilde{\bm Y}_{\mathrm{GridC}}(j\omega))),
     \end{equation}
     where \(\lambda_{\min}(\cdot)\) denotes the minimum eigenvalue of a Hermitian matrix. The quantity \(\alpha(j\omega)\) can be understood as the power grid strength, as it equals the generalized short-circuit ratio (gSCR) if $\widetilde{\bm Y}_{\mathrm{GridC}}(j\omega)$ is a constant matrix~\cite{huang2024gain,dong2019small}. 
     The real-axis values of every point in \(DW(\widetilde{\bm Y}_{\mathrm{GridC}}(j\omega))\) are no smaller than \(\alpha(j\omega)\).  
      Note that the equivalent network consists of the original ac network and the embedded GFM dynamics, where the ac network determines the underlying grid strength, while the GFM devices can equivalently increase the grid strength thanks to their voltage support. Hence, for the systems considered in this paper, the Hermitian part of \(\widetilde{\bm Y}_{\mathrm{GridC}}(j\omega)\) remains positive definite, and \(\alpha(j\omega)>0\). This property is certified by the local GFM bounds derived in the following subsection.
      The imaginary-axis projection of the DW shell determines the phase sector associated with $\alpha(j\omega)$. To be specific, the imaginary-axis values of every point in \(DW(\widetilde{\bm Y}_{\mathrm{GridC}}(j\omega))\) are within $[-\beta(j\omega),\beta(j\omega)]$, where
      \begin{equation}\label{eq:beta}
          \beta(j\omega):= \sigma_{\max}(\mathcal K(\widetilde{\bm Y}_{\mathrm{GridC}}(j\omega))).
      \end{equation}
      Note that $[-\beta(j\omega),\beta(j\omega)]$ also bounds the imaginary-axis values of the numerical range of $\widetilde{\bm Y}_{\mathrm{GridC}}(j\omega)$, which is sectorial and excludes the origin because \(\alpha(j\omega)>0\). 

      
      We consider $H^{-1}(j\omega) = \widetilde{\bm Y}_{\mathrm{GridC}}(j\omega)$ in~\eqref{eq:centralized-DW shell-separation} and thus for \(-\tfrac{1}{\tau}\widetilde{\bm Y}_{\mathrm{GridC}}(j\omega)\), the bounds become \(x\leq-\tfrac{1}{\tau}\alpha(j\omega)\) and \(|y|\leq \tfrac{1}{\tau}\beta(j\omega)\). It can be seen that their ratio is independent of \(\tau\), and thus the phase sector is $[-\theta(j\omega),\theta(j\omega)]$ on the left-half plane as shown in Fig.~\ref{fig:equivalent network-DW shell-enclosure}, where  
      \begin{equation}\label{eq:theta}
\theta(j\omega)=\tan^{-1}(\beta(j\omega)/\alpha(j\omega)) .
      \end{equation}
      We also notice that every point \((x,y,z)\) in a DW shell should satisfy \(z\geq x^2+y^2\).
      These bounds yield the following outer envelope of the equivalent network's  DW shell trajectory (over $\tau \in (0,1]$).
 
     \begin{lemma}
    [An Outer Envelope of the Equivalent Network's DW Shell]  \label{def:equivalent-network-dw-shell-enclosure}
     At a given frequency \(\omega\), the DW shell trajectory of \(-\tfrac{1}{\tau}\widetilde{\bm Y}_{\mathrm{GridC}}(j\omega)\), where \(\tau\in(0,1]\), is enclosed by
     \begin{equation}
     \begin{aligned}
     \mathcal R_{\alpha,\theta}(j\omega)=\{(x,y,z)\mid\;&z\ge x^2+y^2,\; x\le-\alpha(j\omega),\\& |y|\le -x\tan\theta(j\omega)\}. \end{aligned}
     \label{eq:equivalent network-DW shell-enclosure}
      \end{equation}
      Moreover, the \(x\)-\(y\) and \(x\)-\(z\) projections of this envelope are
      \begin{equation}
      \begin{aligned}
      \Pi_{xy}\mathcal R_{\alpha,\theta}(j\omega)
     =\bigl\{ & (x,y)\,\bigm|\,
    x\leq-\alpha(j\omega), \notag\\[-1mm]
      & \, |y|\leq-x\tan\theta(j\omega)\bigr\},
     \\
     \Pi_{xz}\mathcal R_{\alpha,\theta}(j\omega)
    =\bigl\{&(x,z)\,\bigm|\,
     x\leq-\alpha(j\omega),\;
    z\geq x^2\bigr\}.
     \end{aligned}
      \end{equation}
     \end{lemma}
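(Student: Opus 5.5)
Fix $\omega$ and write $Y:=\widetilde{\bm Y}_{\mathrm{GridC}}(j\omega)$. Write $\alpha,\beta,\theta$ for $\alpha(j\omega),\beta(j\omega),\theta(j\omega)$, and recall $\alpha>0$ as stated above. The plan is to take an arbitrary point of $DW(-\tfrac{1}{\tau}Y)$ and check the three defining inequalities of $\mathcal R_{\alpha,\theta}(j\omega)$ one at a time. Let $\tau\in(0,1]$ and take a unit vector $v$. The corresponding point is
\[
x+jy=-\tfrac{1}{\tau}v^*Yv,\qquad z=\tfrac{1}{\tau^2}\|Yv\|^2 .
\]

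\textbf{Step 1 (quadratic bound).} By Cauchy--Schwarz, $|v^*(Yv)|\le\|v\|\,\|Yv\|=\|Yv\|$. Squaring and multiplying by $1/\tau^2$ gives $x^2+y^2\le z$. This holds for any matrix and any scaling.

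\textbf{Step 2 (real part).} We have $\Re(v^*Yv)=v^*\mathcal H(Y)v$, and by the Rayleigh quotient this is at least $\alpha$. Hence $x=-\tfrac{1}{\tau}v^*\mathcal H(Y)v\le-\tfrac{\alpha}{\tau}\le-\alpha$. The last inequality uses $\alpha>0$ and $0<\tau\le 1$; this is the only place where positivity of $\alpha$ and the restriction on $\tau$ are needed.

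\textbf{Step 3 (sector).} We have $\Im(v^*Yv)=v^*\mathcal K(Y)v$, and $\mathcal K(Y)$ is Hermitian. So $|y|=\tfrac{1}{\tau}|v^*\mathcal K(Y)v|\le\tfrac{\beta}{\tau}$. The aim is $|y|\le -x\tan\theta$ with $\tan\theta=\beta/\alpha$. Combining the two bounds separately only gives $|y|\le \beta/\tau=(\beta/\alpha)(\alpha/\tau)$. This is the wrong direction, because $-x\ge\alpha/\tau$ makes $-x\tan\theta$ \emph{larger}, not smaller. The correct comparison is
\[
\frac{|y|}{-x}=\frac{|v^*\mathcal K(Y)v|}{v^*\mathcal H(Y)v}\le\frac{\beta}{\alpha},
\]
since the numerator is at most $\beta$ and the denominator is at least $\alpha>0$. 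The factor $1/\tau$ cancels in this ratio, and multiplying through by $-x>0$ gives $|y|\le -x\tan\theta$.

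\textbf{Step 4 (projections).} For the $x$-$y$ projection, drop the $z$-constraint: every $(x,y)$ satisfying the two planar inequalities lifts to a point of the envelope by choosing $z$ large. For the $x$-$z$ projection, the inclusion "$\subseteq$" follows because $z\ge x^2+y^2\ge x^2$. For "$\supseteq$", given $(x,z)$ with $x\le-\alpha$ and $z\ge x^2$, take $y=0$. This is admissible since $0\le -x\tan\theta$ and $z\ge x^2$.

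The only step requiring care is Step 3, where the sector must be obtained from the ratio rather than from the separate bounds. Everything else is routine.
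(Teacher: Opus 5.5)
Your proof is correct and follows the same route as the paper. The paper justifies the lemma in the discussion just before it, using the bounds $\alpha$ and $\beta$ on the Hermitian and skew-Hermitian parts, the $\tau$-invariance of the sector, and the inequality $z\ge x^2+y^2$. Your Step 3 bounds the pointwise ratio $|v^*\mathcal K(Y)v|/v^*\mathcal H(Y)v\le\beta/\alpha$, rather than comparing the separate bounds $\alpha/\tau$ and $\beta/\tau$, and this is exactly what makes the paper's remark that ``their ratio is independent of $\tau$'' rigorous.
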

     
     The geometric shape of the envelope $\mathcal{R}_{\alpha,\theta}(j\omega)$ is illustrated in Fig.~\ref{fig:equivalent network-DW shell-enclosure}, where Fig.~\ref{fig:equivalent network-DW shell-enclosure}~(a) shows an example DW shell and its outer envelope obtained by~\eqref{eq:equivalent network-DW shell-enclosure}. Fig.~\ref{fig:equivalent network-DW shell-enclosure}~(b) and~(c) show the corresponding \(x\)-\(z\) and \(x\)-\(y\) projections of the DW shell and its envelope, respectively. The \(x\)-\(z\) projection is determined by the equivalent power grid strength $\alpha(j\omega)$, while the \(x\)--\(y\) projection further incorporates the phase sector bounds. 

        \begin{figure}[!t]
    	\centering
        \vspace{-0mm}
    	\includegraphics[width=3.3in]{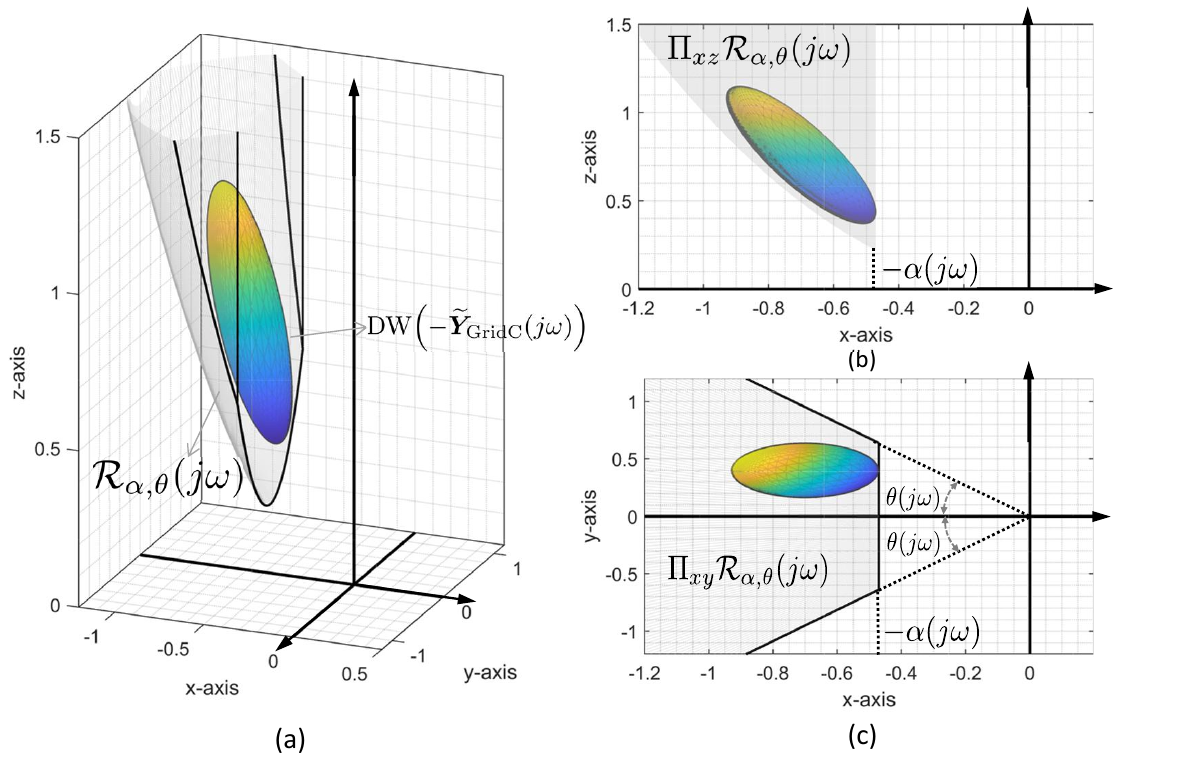}
    	\vspace{-3mm}
    	\caption{DW shell envelope $\mathcal{R}_{\alpha,\theta}(j\omega)$ of the equivalent network. (a) DW shell and its envelope. (b) $x$-$z$ projection. (c) $x$-$y$ projection.}
    	\vspace{-3mm}
	\label{fig:equivalent network-DW shell-enclosure}
      \end{figure}

    \vspace{-2mm}

     \subsection{GFM Indices and their Impacts on DW Shell Envelope}

     To capture the essential dynamics and stabilizing effects of GFM converters, define the \textbf{passivity index} $\nu_i(j\omega)$ and the \textbf{imaginary-axis index} $\mu_i(j\omega)$ of the $i$-th GFM converter as
     \begin{equation} \label{eq:gfm-local-indices}
      \begin{aligned}
       \nu_i(j\omega) &= \lambda_{\min}( \mathcal H( \widetilde{\bm Y}_{\mathrm{GFM},i}(j\omega) )),\\ 
       \mu_i(j\omega) &= \sigma_{\max}( \mathcal K( \widetilde{\bm Y}_{\mathrm{GFM},i}(j\omega) )).
     \end{aligned}
     \end{equation}
     For the $m$ GFM converters, the indices are extended to
     \begin{equation}
     \begin{aligned}
\bm\Gamma(j\omega)&=\operatorname{diag}\left(\nu_1(j\omega),\ldots,\nu_m(j\omega)\right), \\
     \bm\Xi(j\omega) &= \operatorname{diag}\left( \mu_1(j\omega),\ldots,\mu_m(j\omega) \right).
     \label{eq:gfm-local-index-matrices}
     \end{aligned}
    \end{equation}

    The above two indices can be conveniently obtained from the black-box admittance models of GFM converters (e.g., via frequency scanning), without knowing the detailed control scheme and parameters. Moreover, they compress the original $2 \times 2$ transfer function matrix by extracting the key information of how the GFM dynamics will affect the equivalent network described by $\widetilde{\bm Y}_{\mathrm{GridC}}(j\omega)$, as shown in the following result.



    \begin{theorem}[From GFM indices to DW Shell Envelope] \label{prop:equivalent-network-bounds}
    Consider the GFM indices $\bm\Gamma(j\omega)$ and $\bm\Xi(j\omega)$ as in~\eqref{eq:gfm-local-index-matrices}. At a frequency $\omega \in [0,\infty)$ that satisfies $\bm Q_D+\bm\Gamma(j\omega)\succ0$, we have   
    \begin{equation}
    \alpha(j\omega) \ge \alpha_{\mathrm{LB}}(j\omega) := \lambda_{\min}[ \bm Q_A - \bm Q_B \left( \bm Q_D+\bm\Gamma(j\omega) \right)^{-1} \bm Q_C ],
    \label{eq:equivalent-grid-strength-lower-bound}
     \end{equation}
    \begin{equation}
     \begin{aligned}
     \beta(j\omega) \le \beta_{\rm UB}(j\omega):=&\;\|\bm Q_B\bm Q_D^{-\frac{1}{2}}\|\|\bm Q_D^{-\frac{1}{2}}\bm Q_C\|\\
     &\times\frac{\lambda_{\max}\left(\bm Q_D^{-\frac{1}{2}}\bm\Xi(j\omega)\bm Q_D^{-\frac{1}{2}}\right)}{\left[1+\lambda_{\min}\left(\bm Q_D^{-\frac{1}{2}}\bm\Gamma(j\omega)\bm Q_D^{-\frac{1}{2}}\right)\right]^2,}
     \end{aligned}\label{eq:equivalent network-imaginary-axis-upper-bound}
     \end{equation}
     where $\alpha(j\omega)$ and $\beta(j\omega)$ are the tight bounds of the DW shell of $\widetilde{\bm Y}_{\mathrm{GridC}}(j\omega)$, as given in~\eqref{eq:alpha} and~\eqref{eq:beta}. We further have
     \begin{equation} 
     \theta(j\omega) \le \theta_{\rm UB}(j\omega):=\tan^{-1}\frac{\beta_{\rm UB}(j\omega)}{\alpha_{\rm LB}(j\omega)}\,,
      \label{eq:equivalent network-phase-sector-upper-bound}
     \end{equation}
     where $\theta(j\omega)$ is the phase bound in~\eqref{eq:theta}, and ${\alpha_{\rm LB}(j\omega)}>0$ because of the stabilizing effect of GFM converters.
     \end{theorem}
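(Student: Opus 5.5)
Write $\bm M:=\bm Q_D^{-1/2}\otimes I_2$, $\bm G:=\bm M\widetilde{\bm Y}_{\mathrm{GFM}}(j\omega)\bm M$ and $\bm P:=\bm Q_B\bm Q_D^{-1/2}\otimes I_2$, so that $\bm P^{*}=\bm Q_D^{-1/2}\bm Q_C\otimes I_2$ because the rescaled Laplacian is real symmetric. Then~\eqref{eq:equivalent network-admittance} reads $\widetilde{\bm Y}_{\mathrm{GridC}}=\bm Q_A\otimes I_2-\bm P(I+\bm G)^{-1}\bm P^{*}$. Since $\bm P(\cdot)\bm P^{*}$ is a congruence, the Hermitian and skew parts separate:
\[
\mathcal H(\widetilde{\bm Y}_{\mathrm{GridC}})=\bm Q_A\otimes I_2-\bm P\,\mathcal H\bigl((I+\bm G)^{-1}\bigr)\bm P^{*},\qquad
\mathcal K(\widetilde{\bm Y}_{\mathrm{GridC}})=-\bm P\,\mathcal K\bigl((I+\bm G)^{-1}\bigr)\bm P^{*}.
\]
The whole proof therefore reduces to bounding the Hermitian and skew-Hermitian parts of the inverse $\bm S_{\mathrm{GFM}}=(I+\bm G)^{-1}$ using only the block-wise indices.

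\emph{Step 1 (local indices to Hermitian bounds).} Because $\widetilde{\bm Y}_{\mathrm{GFM}}$ is block diagonal, $\mathcal H(\widetilde{\bm Y}_{\mathrm{GFM}})\succeq\bm\Gamma\otimes I_2$ and $-\bm\Xi\otimes I_2\preceq\mathcal K(\widetilde{\bm Y}_{\mathrm{GFM}})\preceq\bm\Xi\otimes I_2$. Congruence with $\bm M$ then gives $\mathcal H(I+\bm G)\succeq I+\bm M(\bm\Gamma\otimes I_2)\bm M=:\bm H_0$ and $\|\mathcal K(\bm G)\|\le\lambda_{\max}(\bm Q_D^{-1/2}\bm\Xi\bm Q_D^{-1/2})$. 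The assumption $\bm Q_D+\bm\Gamma\succ0$ is exactly $\bm H_0\succ0$, so $I+\bm G$ has positive definite Hermitian part and is invertible.

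\emph{Step 2 ($\alpha$ bound).} For $A=H+jK$ with $H\succ0$, I would use $\mathcal H(A^{-1})=A^{-*}HA^{-1}$ together with the identity $\mathcal H(A^{-1})=(H+KH^{-1}K)^{-1}\preceq H^{-1}$. Applying this to $A=I+\bm G$ and then using $H\succeq\bm H_0$ gives $\mathcal H(\bm S_{\mathrm{GFM}})\preceq\bm H_0^{-1}$. Hence
\[
\mathcal H(\widetilde{\bm Y}_{\mathrm{GridC}})\succeq\bm Q_A\otimes I_2-\bm P\bm H_0^{-1}\bm P^{*}=\bigl[\bm Q_A-\bm Q_B(\bm Q_D+\bm\Gamma)^{-1}\bm Q_C\bigr]\otimes I_2,
\]
after simplifying $\bm Q_D^{-1/2}(I+\bm Q_D^{-1/2}\bm\Gamma\bm Q_D^{-1/2})^{-1}\bm Q_D^{-1/2}=(\bm Q_D+\bm\Gamma)^{-1}$. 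Taking $\lambda_{\min}$ yields~\eqref{eq:equivalent-grid-strength-lower-bound}. Positivity of $\alpha_{\rm LB}$ follows because $\bm Q_A-\bm Q_B(\bm Q_D+\bm\Gamma)^{-1}\bm Q_C$ is the Schur complement of the positive definite matrix $\bm Q+\operatorname{diag}(0,\bm\Gamma)$ (when $\bm\Gamma\succeq0$), or in general under the standing assumption.

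\emph{Step 3 ($\beta$ bound, main obstacle).} Here $\mathcal K(A^{-1})=-A^{-1}\mathcal K(A)A^{-*}$, so
\[
\|\mathcal K(\bm S_{\mathrm{GFM}})\|\le\|A^{-1}\|^{2}\,\|\mathcal K(\bm G)\|.
\]
For the gain I would use the numerical-range bound $\sigma_{\min}(A)\ge\lambda_{\min}(\mathcal H(A))\ge1+\lambda_{\min}(\bm Q_D^{-1/2}\bm\Gamma\bm Q_D^{-1/2})$. This gives
\[
\|A^{-1}\|^{2}\le\bigl[1+\lambda_{\min}(\bm Q_D^{-1/2}\bm\Gamma\bm Q_D^{-1/2})\bigr]^{-2}.
\]
Combining with Step 1 and then $\|\bm P X\bm P^{*}\|\le\|\bm P\|\|X\|\|\bm P^{*}\|$ produces~\eqref{eq:equivalent network-imaginary-axis-upper-bound}. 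Since $\mathcal K$ is Hermitian, its largest singular value equals its spectral norm.

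The delicate point is the norm bound on $\mathcal K(\bm G)$. The skew parts are bounded only in norm, by $\mu_i$, and not in sign, so I need that $-\bm\Xi\otimes I_2\preceq\mathcal K(\widetilde{\bm Y}_{\mathrm{GFM}})\preceq\bm\Xi\otimes I_2$ survives congruence. It does survive as a Loewner two-sided bound, which gives the $\lambda_{\max}$ expression.

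\emph{Step 4.} Finally, $\theta=\tan^{-1}(\beta/\alpha)$ is monotone increasing in $\beta$ and decreasing in $\alpha>0$. Given $0<\alpha_{\rm LB}\le\alpha$ and $\beta\le\beta_{\rm UB}$, this yields~\eqref{eq:equivalent network-phase-sector-upper-bound}.
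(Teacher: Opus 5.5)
Your proposal is correct and follows the paper's proof essentially step by step. It uses the same reduction to $\bm S_{\mathrm{GFM}}=(I+\bm G)^{-1}$, the inequality $\mathcal H(A^{-1})\preceq\mathcal H(A)^{-1}$, the bound $\sigma_{\min}(A)\ge\lambda_{\min}(\mathcal H(A))$, the identity $\mathcal K(A^{-1})=-A^{-1}\mathcal K(A)A^{-*}$, and submultiplicativity, while also making explicit what the paper leaves implicit: the symmetry $\bm Q_B^\top=\bm Q_C$, the identity $\mathcal H(A^{-1})=(H+KH^{-1}K)^{-1}$, and a Schur-complement reason for $\alpha_{\rm LB}>0$ when $\bm\Gamma\succeq0$.
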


     \begin{proof}
     At a frequency $\omega \in [0,\infty)$, we omit the argument of $j\omega$ in the transfer function (matrix) for simplicity and define
     \[\bm M=I_{2m}+(\bm Q_D^{-\frac{1}{2}}\otimes I_2)\widetilde{\bm Y}_{\rm GFM}(\bm Q_D^{-\frac{1}{2}}\otimes I_2).\]
     
     From the definitions of $\bm\Gamma$ and $\bm\Xi$, we have
     \[\mathcal H(\bm M)\succeq \left(I_m+\bm Q_D^{-\frac{1}{2}}\bm\Gamma\bm Q_D^{-\frac{1}{2}}\right) \otimes I_2\succ0 , {\rm and} \]
     $$
     \|\mathcal K(\bm M)\|\le \lambda_{\max}\!\left(\bm Q_D^{-\frac{1}{2}}\bm\Xi\bm Q_D^{-\frac{1}{2}}\right).
     $$
     Here, \(A\succeq B\) means that \(A-B\) is positive semi-definite.
     
     By further considering that \(\bm S_{\rm GFM}=\bm M^{-1}\) and \(\mathcal H(\bm M^{-1})\preceq\mathcal H(\bm M)^{-1}\), we obtain \(\mathcal H(\bm S_{\rm GFM})\preceq (I_m+\bm Q_D^{-\frac{1}{2}}\bm\Gamma \bm Q_D^{-\frac{1}{2}})^{-1}\otimes I_2\). By substituting the above equation into \eqref{eq:equivalent network-admittance} and considering \(\bm Q_D^{-\frac{1}{2}}(I_m+\bm Q_D^{-\frac{1}{2}}\bm\Gamma \bm Q_D^{-\frac{1}{2}})^{-1}\bm Q_D^{-\frac{1}{2}} =(\bm Q_D+\bm\Gamma)^{-1}\), we derive
     $$
     \mathcal H(\widetilde{\bm Y}_{\rm GridC})\succeq \left[\bm Q_A-\bm Q_B(\bm Q_D+\bm\Gamma)^{-1}\bm Q_C\right] \otimes I_2,
     $$
     which proves \eqref{eq:equivalent-grid-strength-lower-bound}. Since \(\sigma_{\min}(\bm M)\geq 1+\lambda_{\min}(\bm Q_D^{-\frac{1}{2}}\bm\Gamma \bm Q_D^{-\frac{1}{2}})\) and \(\mathcal K(\bm M^{-1})=-(\bm M^*)^{-1} \mathcal K(\bm M)\bm M^{-1}\), it follows that
     $$
     \left\|\mathcal K(\bm S_{\rm GFM})\right\|_2\leq \frac{\lambda_{\max}\!\left(\bm Q_D^{-\frac{1}{2}}\bm\Xi\bm Q_D^{-\frac{1}{2}}\right)}{\left[1+\lambda_{\min}\!\left(\bm Q_D^{-\frac{1}{2}}\bm\Gamma\bm Q_D^{-\frac{1}{2}}\right)\right]^2.}
     $$
     Substituting this bound into \eqref{eq:equivalent network-admittance} and applying the submultiplicative property of the two-norm operation then proves the bound on \(\beta(j\omega)\). The expression of \(\theta_{\rm UB}(j\omega)\) then follows from the inequalities of \(\alpha_{\rm LB}(j\omega)\) and \(\beta_{\rm UB}(j\omega)\).
     \end{proof}

    The results in Theorem~\ref{prop:equivalent-network-bounds} explicitly demonstrate how the passivity index $\nu_i(j\omega)$ and the imaginary-axis index $\mu_i(j\omega)$ of GFM converters affect the DW shell envelope of the equivalent network. For instance, it can be seen from~\eqref{eq:equivalent-grid-strength-lower-bound} that the passivity index $\nu_i(j\omega)$ mainly affects $\alpha(j\omega)$, which determines the right bound in the $x$-$z$ graph; see Fig.~\ref{fig:equivalent network-DW shell-enclosure}~(b). This bound is closely related to the equivalent power grid strength, as investigated in~\cite{huang2025DW} via $x$-$z$ graph analysis. However, the impact of GFM converters was not considered in~\cite{huang2025DW}, and here we derive~\eqref{eq:equivalent-grid-strength-lower-bound} to theoretically show that it is the passivity index $\nu_i(j\omega)$ that decides whether a GFM converter, under certain control schemes and parameters, can increase the power grid strength or not. Moreover, since $\bm\Gamma(j\omega)$ enters~\eqref{eq:equivalent-grid-strength-lower-bound} through Kron reduction, we can deduce that a larger $\nu_i(j\omega)$ helps increase the power grid strength more. This also provides a guideline of designing GFM control: the control scheme and parameters should be chosen to increase $\nu_i(j\omega)$ as far as possible. The imaginary-axis index $\mu_i(j\omega)$ mainly affects the bound $\beta(j\omega)$ and the phase bound $\theta(j\omega)$, which describe the shape of the DW shell along the imaginary axis. It can be seen from~\eqref{eq:equivalent network-imaginary-axis-upper-bound} and~\eqref{eq:equivalent network-phase-sector-upper-bound} that a larger $\mu_i(j\omega)$ will likely result in a larger $\theta(j\omega)$, which, as will be shown below, may cause violation of the small-phase condition and result in instabilities. Hence, it is favorable to design the GFM control scheme and parameters to reduce $\mu_i(j\omega)$. 


    \vspace{-3mm}
    \subsection{Stability Certificates Based on the DW Shell Envelope}
     

     We use $\alpha_{\mathrm{LB}}(j\omega)$ and $\theta_{\mathrm{UB}}(j\omega)$ to construct the DW shell envelope \(\mathcal R_{\alpha_{\mathrm{LB}},\theta_{\mathrm{UB}}}(j\omega)\) according to~\eqref{eq:equivalent network-DW shell-enclosure}, which enables the following stability certificates to analyze the interaction between GFM and GFL converters.

     \begin{corollary}[Interaction Between GFL Converters and the Equivalent Network]
     \label{cor:centralized-DW shell-separation}
     If \(\widetilde{\bm Y}^{-1}_{\mathrm{GridC}}(s)\) and $\widetilde{\bm{Y}}_{\mathrm{GFL}}(s)$ are stable in open loop, then the closed-loop system $\widetilde{\bm Y}_{\rm GFL}(s)\# \widetilde{\bm Y}_{\rm GridC}^{-1}(s)$ which represents the interconnection in Fig.~\ref{fig:dynamic-repartitioning}, is stable if, for each \(\omega\in[0,\infty)\),
     \begin{equation}
    \operatorname{conv}
     \left(
     \bigcup_{i=1}^{n}
     DW(
     \widetilde{\bm Y}_{{\rm GFL},i}(j\omega)
    )
    \right)
    \cap
    \mathcal R_{\alpha_{\mathrm{LB}},\theta_{\mathrm{UB}}}(j\omega) =
     \varnothing.
    \label{eq:DW shell-separation}
    \end{equation}
    where \(\operatorname{conv}(\cdot)\) denotes the convex hull.
    \end{corollary}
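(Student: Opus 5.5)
The plan is to derive the corollary directly from Lemma~\ref{thm:geometric-centralized-stability}, with $G(s)=\widetilde{\bm Y}_{\rm GFL}(s)$ and $H(s)=\widetilde{\bm Y}_{\rm GridC}^{-1}(s)$. Both are assumed open-loop stable, so the standing hypotheses of the lemma hold. It then suffices to show that, at each frequency, the DW shell of $\widetilde{\bm Y}_{\rm GFL}(j\omega)$ does not meet $DW(-\tfrac{1}{\tau}\widetilde{\bm Y}_{\rm GridC}(j\omega))$ for any $\tau\in(0,1]$. Here we use $H^{-1}=\widetilde{\bm Y}_{\rm GridC}$.

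The first step treats the device side. Since $\widetilde{\bm Y}_{\rm GFL}$ is block-diagonal in the $\widetilde{\bm Y}_{{\rm GFL},i}$, I will invoke the fact stated before Lemma~\ref{thm:geometric-decentralized-stability}: the DW shell of a block-diagonal matrix lies in the convex hull of the DW shells of its blocks. To check the inclusion direction, take a unit vector $x$ partitioned as $(x_i)$ with weights $t_i=\|x_i\|^2$. Then $(x^*Gx,\|Gx\|^2)=\sum_i t_i\,(\hat x_i^*G_i\hat x_i,\|G_i\hat x_i\|^2)$ with $\hat x_i=x_i/\|x_i\|$. So $DW(\widetilde{\bm Y}_{\rm GFL}(j\omega))$ is contained in the convex hull appearing in~\eqref{eq:DW shell-separation}.

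The second step treats the network side. By Lemma~\ref{def:equivalent-network-dw-shell-enclosure}, the union over $\tau\in(0,1]$ of $DW(-\tfrac1\tau\widetilde{\bm Y}_{\rm GridC}(j\omega))$ lies in $\mathcal R_{\alpha,\theta}(j\omega)$. I will then show that $\mathcal R_{\alpha,\theta}\subseteq\mathcal R_{\alpha_{\rm LB},\theta_{\rm UB}}$ using Theorem~\ref{prop:equivalent-network-bounds}:
\begin{itemize}
\item $x\le-\alpha$ implies $x\le-\alpha_{\rm LB}$, since $\alpha\ge\alpha_{\rm LB}$;
\item $|y|\le -x\tan\theta$ implies $|y|\le -x\tan\theta_{\rm UB}$, because $-x>0$, $\theta\le\theta_{\rm UB}$, and $\tan$ is monotone on $[0,\pi/2)$;
\item the constraint $z\ge x^2+y^2$ is unchanged.
\end{itemize}
Here I need $\alpha_{\rm LB}>0$, as asserted in the theorem, so that $\theta_{\rm UB}\in[0,\pi/2)$. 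If that positivity fails at some frequency, the statement has to be read at frequencies where the theorem's hypothesis $\bm Q_D+\bm\Gamma\succ0$ holds and $\alpha_{\rm LB}>0$.

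Combining the two inclusions, the empty intersection in~\eqref{eq:DW shell-separation} implies the empty intersection in~\eqref{eq:centralized-DW shell-separation} for every $\tau$ and every $\omega$. Stability then follows from Lemma~\ref{thm:geometric-centralized-stability}.

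The main obstacle is a technicality, not a computation. Lemma~\ref{thm:geometric-centralized-stability} is stated for $G,H\in\mathcal{RH}_\infty$, so I must make sure that $\widetilde{\bm Y}_{\rm GridC}^{-1}$ being stable, as assumed, is exactly what is needed. I must also check that the envelope inclusion covers all $\tau\in(0,1]$ simultaneously, and not just $\tau=1$. Lemma~\ref{def:equivalent-network-dw-shell-enclosure} already gives this, since the scaled bounds $x\le-\alpha/\tau\le-\alpha$ are uniform in $\tau$.
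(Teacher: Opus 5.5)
Your proposal is correct and takes the same route as the paper: apply Lemma~\ref{thm:geometric-centralized-stability} with $G=\widetilde{\bm Y}_{\rm GFL}$ and $H=\widetilde{\bm Y}_{\rm GridC}^{-1}$, enclose $DW(\widetilde{\bm Y}_{\rm GFL}(j\omega))$ in the convex hull of the block shells, and use that $\mathcal R_{\alpha_{\rm LB},\theta_{\rm UB}}(j\omega)$ contains $DW(-\tfrac{1}{\tau}\widetilde{\bm Y}_{\rm GridC}(j\omega))$ for every $\tau\in(0,1]$. You make explicit two steps the paper leaves implicit: the one-sided block-diagonal inclusion, which is all that is needed, and the chain $\mathcal R_{\alpha,\theta}\subseteq\mathcal R_{\alpha_{\rm LB},\theta_{\rm UB}}$ via Theorem~\ref{prop:equivalent-network-bounds}, which requires $\alpha_{\rm LB}(j\omega)>0$.
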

     \begin{proof}
    As \(\widetilde{\bm Y}_{\rm GFL}(s)\) is block-diagonal, its DW shell is the convex hull of all $DW(\widetilde{\bm Y}_{{\rm GFL},i}(j\omega))$. Moreover, \(\mathcal R_{\alpha_{\mathrm{LB}},\theta_{\mathrm{UB}}}(j\omega)\) contains \(DW(-\frac{1}{\tau}\widetilde{\bm Y}_{\rm GridC}(j\omega))\) for all \(\tau\in(0,1]\). Let \(G(s)=\widetilde{\bm Y}_{\rm GFL}(s)\) and \(H(s)=\widetilde{\bm Y}_{\rm GridC}^{-1}(s)\), and one can deduce that \eqref{eq:DW shell-separation} ensures the DW shell separation required by Lemma~\ref{thm:geometric-centralized-stability}. 
    \end{proof}


    Then, we derive the corresponding decentralized stability certificates to enable the analysis of large-scale systems, which are based on projecting the DW shell and its envelope onto the $x$-$z$ and $x$-$y$ planes.

     \begin{corollary}[Decentralized Stability Certificates With an Equivalent Network Fusing GFM Dynamics]
      \label{cor:projection-based-decentralized-conditions}
      If $\widetilde{\bm{Y}}_{\mathrm{GFL}}(s)$ and \(\widetilde{\bm Y}_{\mathrm{GridC}}^{-1}(s)\) are stable in open loop, then the closed-loop system $\widetilde{\bm Y}_{\rm GFL}(s)\# \widetilde{\bm Y}_{\rm GridC}^{-1}(s)$ which represents the interconnection in Fig.~\ref{fig:dynamic-repartitioning}, is stable if,
      for each \(\omega\in[0,\infty)\), either

\begin{enumerate}[1)]
    \item \label{cond:equivalent network-xz-separation}
    the decentralized $x$-$z$ \textbf{graph separation} condition holds, i.e.,
    for each $i\in\{1,2,\ldots,n\}$,
    \begin{equation}
    \mathcal P(
    \widetilde{\bm Y}_{{\rm GFL},i}(j\omega)
    )
    \cap
    \Pi_{xz}\mathcal R_{\alpha_{\mathrm{LB}},\theta_{\mathrm{UB}}}(j\omega)
    =
    \varnothing , \quad {\rm or}
    \label{eq:equivalent network-xz-separation}
    \end{equation}
    \item \label{cond:equivalent network-small-phase}
    each \(\widetilde{\bm Y}_{\mathrm{GFL},i}(j\omega)\) is sectorial
    and the decentralized \textbf{small-phase} condition holds, i.e.,
 \begin{equation}
\resizebox{\columnwidth}{!}{$
\hspace{-4mm}\begin{cases}
\textit{a) } 
\max\limits_i\,\phi_{\max}(
\widetilde{\bm Y}_{{\rm GFL},i}(j\omega)
)<\pi-\theta_{\rm UB}(j\omega),\\[2pt]
\textit{b) } 
\min\limits_i\,\phi_{\min}(
\widetilde{\bm Y}_{{\rm GFL},i}(j\omega)
)>-\pi+\theta_{\rm UB}(j\omega),\\[2pt]
\textit{c) } 
\max\limits_i\,\phi_{\max}(
\widetilde{\bm Y}_{{\rm GFL},i}(j\omega)
)
-
\min\limits_i\,\phi_{\min}(
\widetilde{\bm Y}_{{\rm GFL},i}(j\omega)
)<\pi .
\end{cases}
$}
\label{eq:equivalent network-small-phase-condition}
\end{equation}
\end{enumerate}
\end{corollary}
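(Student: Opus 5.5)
We reduce both conditions to Lemma~\ref{thm:geometric-decentralized-stability}. The one difference is that the role of $H^{-1}$ is now played by $\widetilde{\bm Y}_{\rm GridC}(j\omega)$, which is not Hermitian, so the known projections must be replaced by the envelope $\mathcal R_{\alpha_{\mathrm{LB}},\theta_{\mathrm{UB}}}(j\omega)$. We set $G(s)=\widetilde{\bm Y}_{\rm GFL}(s)$, which is block-diagonal, and $H(s)=\widetilde{\bm Y}_{\rm GridC}^{-1}(s)$. Both are open-loop stable by assumption, so it suffices to verify, at each frequency, the DW shell separation of Lemma~\ref{thm:geometric-centralized-stability}. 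By Theorem~\ref{prop:equivalent-network-bounds} and Lemma~\ref{def:equivalent-network-dw-shell-enclosure}, with $\alpha\ge\alpha_{\mathrm{LB}}>0$ and $\theta\le\theta_{\mathrm{UB}}$, we have $DW(-\tfrac1\tau \widetilde{\bm Y}_{\rm GridC})\subseteq \mathcal R_{\alpha,\theta}\subseteq\mathcal R_{\alpha_{\mathrm{LB}},\theta_{\mathrm{UB}}}$ for all $\tau\in(0,1]$. The monotonicity of the envelope in $\alpha$ and $\theta$ follows directly from \eqref{eq:equivalent network-DW shell-enclosure}.

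\emph{Condition 1).} The goal is to show $DW(G)\cap \mathcal R_{\alpha_{\mathrm{LB}},\theta_{\mathrm{UB}}}=\varnothing$. For any unit vector $x=[x_1^\top,\ldots,x_n^\top]^\top$ partitioned conformally with the blocks, the point $(\Re x^*Gx,\|Gx\|^2)$ equals $\sum_i\|x_i\|^2(\Re \hat x_i^*G_i\hat x_i,\|G_i\hat x_i\|^2)$, where $\hat x_i=x_i/\|x_i\|$. This is a convex combination of points in the $P(G_i)$. The set $\Pi_{xz}\mathcal R=\{x\le-\alpha_{\mathrm{LB}},z\ge x^2\}$ is convex. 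The key step is to show that if each $P(G_i)$ avoids this set, then so does their convex combination. A general convex hull can intersect a convex set even when each piece avoids it, so we instead use the monotone structure of the complement of $\Pi_{xz}\mathcal R$. Whenever $x\le -\alpha_{\mathrm{LB}}$, each point of $P(G_i)$ satisfies $z< x^2$. We then use the fact that $P(G_i)$ is itself a projection of a DW shell, so for each $i$ there is a separating parabola or line of the form used in the proof of Theorem~4.7 in~\cite{huang2025DW}: there exists $\tau$-uniform $c\ge \alpha_{\mathrm{LB}}$ with $z+2cx+c\cdot(\cdot)<0$-type quadratic inequalities that are linear in $(x,z)$ and hence preserved under convex combinations. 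Concretely, we mimic the argument of condition 3) in Lemma~\ref{thm:geometric-decentralized-stability}, with the Hermitian $H^{-1}$ replaced by the $x$-$z$ projection of the envelope. This works because that argument only used $P(-\tfrac1\tau H^{-1})\subseteq\{x\le-\tfrac1\tau\lambda_{\min},\,z\ge x^2\}$, which is exactly what the envelope provides. Separation in the $x$-$z$ projection then implies separation of the 3D shells.

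\emph{Condition 2).} The numerical range of $-\tfrac1\tau\widetilde{\bm Y}_{\rm GridC}$ lies in $\Pi_{xy}\mathcal R$, so the phases of $H=\widetilde{\bm Y}_{\rm GridC}^{-1}$ satisfy $\phi(H)\subseteq[-\theta_{\mathrm{UB}},\theta_{\mathrm{UB}}]$, since inversion of a sectorial matrix negates the phase interval. Substituting $\phi_{\max}(H)\le\theta_{\mathrm{UB}}$ and $\phi_{\min}(H)\ge-\theta_{\mathrm{UB}}$ into \eqref{eq:phase-condition} gives exactly \eqref{eq:equivalent network-small-phase-condition}. Condition 2) of Lemma~\ref{thm:geometric-decentralized-stability} then applies. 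Its proof is a small-phase argument that does not need $H$ to be Hermitian.

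The main obstacle is condition 1). Lemma~\ref{thm:geometric-decentralized-stability} 3) was stated only for Hermitian $H$, so we must verify that its convexity and separating-hyperplane argument depends only on the $x$-$z$ outer bound $\Pi_{xz}\mathcal R_{\alpha_{\mathrm{LB}},\theta_{\mathrm{UB}}}$ and not on the Hermitian structure.
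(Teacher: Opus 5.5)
Your proof has a genuine gap in condition 1). The decisive step is never carried out. The ``separating parabola or line'' and the inequality ``$z+2cx+c\cdot(\cdot)<0$'' are never specified, and deferring to a mimicked version of the proof of Theorem~4.7 in~\cite{huang2025DW} is exactly the point you leave open. So nothing in the proposal shows that the convex combinations $\sum_i\|x_i\|^2(\Re\,\hat x_i^*G_i\hat x_i,\|G_i\hat x_i\|^2)$ avoid $\Pi_{xz}\mathcal R_{\alpha_{\rm LB},\theta_{\rm UB}}$.

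The missing idea is elementary: every point of every $x$-$z$ graph satisfies $z\ge x^2$. For unit $\hat x$, Cauchy--Schwarz gives $\|G_i\hat x\|^2\ge|\hat x^*G_i\hat x|^2\ge(\Re\,\hat x^*G_i\hat x)^2$. Hence $P(G_i)\cap\Pi_{xz}\mathcal R_{\alpha_{\rm LB},\theta_{\rm UB}}=P(G_i)\cap\{x\le-\alpha_{\rm LB}\}$, and the hypothesis is equivalent to $P(G_i)\subseteq\{x>-\alpha_{\rm LB}\}$. Your own remark that such points would need $z<x^2$ already shows there are none. An open half-plane is convex, so your convex-combination formula gives $\Re(x^*Gx)>-\alpha_{\rm LB}$ for every unit $x$, i.e., $DW(G)\subseteq\{x>-\alpha_{\rm LB}\}$, while $\mathcal R_{\alpha_{\rm LB},\theta_{\rm UB}}\subseteq\{x\le-\alpha_{\rm LB}\}$. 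The $\tau$-uniform separator is simply the plane $x=-\alpha_{\rm LB}$; no quadratic inequality and no Hermitian structure of the network is needed. This is the paper's argument.

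Your treatment of condition 2) is correct but takes a different route from the paper. You show that the phases of $\widetilde{\bm Y}_{\rm GridC}(j\omega)$ lie in $[-\theta_{\rm UB},\theta_{\rm UB}]$. You then use $\phi(A^{-1})=-\phi(A)$, which is valid since $x^*A^{-1}x=\overline{y^*Ay}$ with $y=A^{-1}x$, and invoke condition~2) of Lemma~\ref{thm:geometric-decentralized-stability}. The paper instead argues directly that a)--c) keep $\operatorname{conv}\bigcup_{i} W(\widetilde{\bm Y}_{{\rm GFL},i}(j\omega))$ inside an open sector disjoint from $\Pi_{xy}\mathcal R_{\alpha_{\rm LB},\theta_{\rm UB}}(j\omega)$, and then applies Corollary~\ref{cor:centralized-DW shell-separation}.

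The paper's route has one advantage you should adopt. Both conditions then end in the same frequency-wise DW-separation test of Lemma~\ref{thm:geometric-centralized-stability}, and that is what justifies using 1) at some frequencies and 2) at others. As written, you send 1) through Lemma~\ref{thm:geometric-centralized-stability} and 2) through Lemma~\ref{thm:geometric-decentralized-stability}, so this frequency-wise mixing is not covered. To fix it, add the observation that 2) implies $W(G)\cap\Pi_{xy}\mathcal R_{\alpha_{\rm LB},\theta_{\rm UB}}=\varnothing$ at that frequency.
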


     \begin{proof}
     In Condition~1), \(\Pi_{xz}\mathcal R_{\alpha_{\mathrm{LB}},\theta_{\mathrm{UB}}} (j\omega)\) is defined by \(x\leq-\alpha_{\mathrm{LB}}(j\omega)\) and \(z\geq x^2\). Since the GFL converter's \(x\)-\(z\) graph also satisfies \(z\geq x^2\), its separation from the envelope indicates \(x>-\alpha_{\mathrm{LB}}(j\omega)\), and thus the convex hull of all the GFL converters' \(x\)-\(z\) graph remains separated from the network's envelope.
     In Condition~2), the sub-condition~c) ensures that the numerical ranges of all GFL converters are within a common phase interval with a width smaller than \(\pi\). The sub-conditions~a) and b) ensure that this interval is separated from \(\Pi_{xy}\mathcal R_{\alpha_{\mathrm{LB}},\theta_{\mathrm{UB}}} (j\omega)\), and thus the convex hull of all the GFL converters' numerical ranges remains separated from the network's envelope. In short, both conditions imply the DW shell separation required by Corollary~\ref{cor:centralized-DW shell-separation} through suitable projections. This completes the proof.
\end{proof}

\vspace{-1mm}
     Based on the condition in~\eqref{eq:equivalent network-xz-separation}, the frequency-wise stability distance of the $i$-th GFL converter is defined by
      \begin{equation}
       \label{eq:xz-graph-stability-margin}
       \rho_i(j\omega):= \operatorname{dist}\left( P\bigl( \widetilde{\bm Y}_{\mathrm{GFL},i}(j\omega) \bigr), \Pi_{xz}\mathcal R_{\alpha_{\mathrm{LB}},\theta_{\mathrm{UB}}}(j\omega) \right),
      \end{equation}
      where \(\operatorname{dist}(\cdot,\cdot)\) denotes the minimum Euclidean distance between two sets. A positive \(\rho_i(j\omega)\) certifies the \(x\)-\(z\) graph separation at \(\omega\), while a zero margin indicates instability risk. This margin will be used to show how the individual GFL converters interact with the equivalent network and with the GFM converters across different frequencies.
\vspace{-2mm}
     \section{Illustrative Examples}\label{sec:case-studies}
     
     The proposed method is tested on a three-converter system and a modified IEEE 68-bus system. The three-converter system is used to illustrate the equivalent network reformulation and show how the GFM dynamics affect the system stability. The 68-bus system demonstrates the applicability of the method to large systems containing heterogeneous grid-forming devices. In both cases, the frequency-domain results are accompanied by time-domain simulations.
    \vspace{-2mm}
     \subsection{Case Study of a Three-Converter System}

    \begin{figure}[!t]
    	\centering
        \vspace{0.5mm}
    	\includegraphics[width=3.0in]{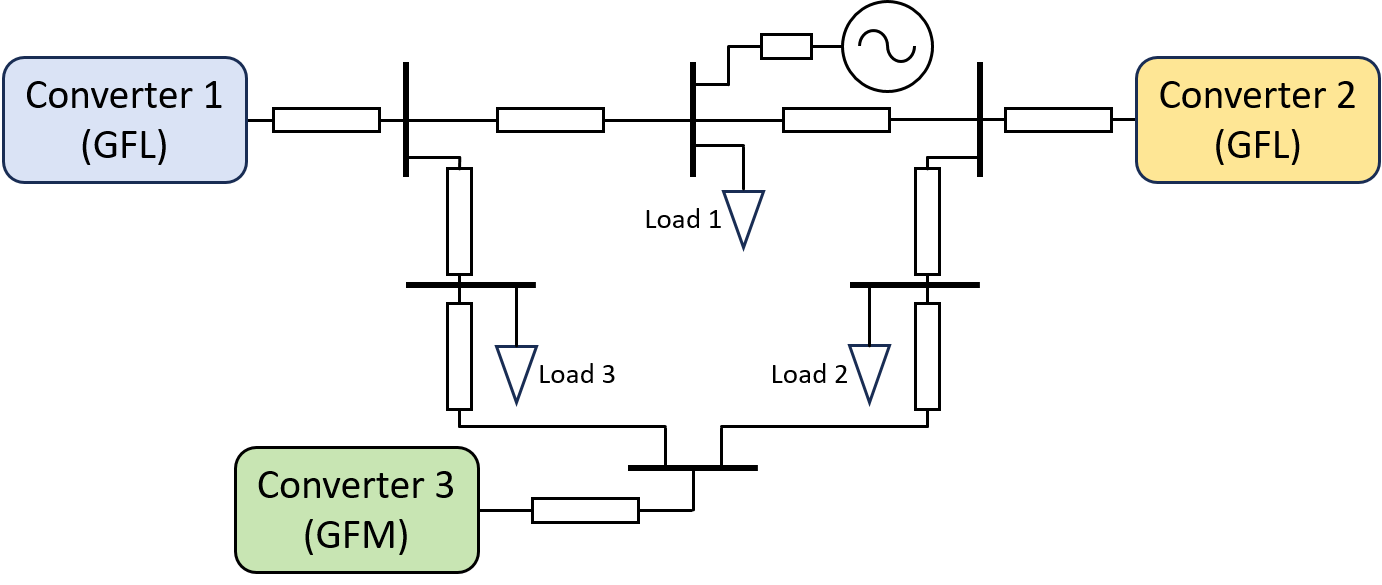}
    	\vspace{-2mm}
    	\caption{A three-converter test system.}
    	\vspace{-2mm}
	\label{fig:three-converter-system}
    \end{figure}
    
    Consider the three-converter system shown in Fig.~\ref{fig:three-converter-system}, where Converters 1 and 2 operate in GFL mode and Converter 3 operates in GFM mode. The PLL bandwidths of Converters 1 and 2 are $80$~rad/s and $60$~rad/s, respectively. The other parameters are provided in the supplementary material. We aim to show that when analyzing a system with both GFM and GFL converters, the GFM dynamics should be fused into the network so that one can see how GFM converters equivalently increase the power grid strength. 

    Fig.~\ref{fig:three-converter-standard-certificate} first shows the frequency-domain plots when the GFM dynamics are not fused into the network, that is, the formulation in~\eqref{eq:rescaled-interconnection} is used for the decentralized stability analysis enabled by Lemma~\ref{thm:geometric-decentralized-stability}. The displayed distances are computed based on~\eqref{eq:xz-separation-condition}, i.e., the distance between the converter's $x$-$z$ graph and the network's $x$-$z$ graph. If the distance is larger than 0, then the $x$-$z$ graph separation condition~\eqref{eq:xz-separation-condition} is satisfied. 
    Similar to~\cite{huang2024gain}, the phase area of the network is defined as $[-\pi - \phi_{\min}(\widetilde{\bm{Y}}_{\rm Grid}(j\omega)),\pi - \phi_{\max}(\widetilde{\bm{Y}}_{\rm Grid}(j\omega))]$ and the phase area of the $i$-th converter is $[\phi_{\min}(\widetilde{\bm Y}_{{\rm CBR},i}(j\omega)),\phi_{\max}(\widetilde{\bm Y}_{{\rm CBR},i}(j\omega))]$. Hence, if the converter's phase area is contained in the network's phase area, then the phase condition~\eqref{eq:phase-condition} is satisfied. It can be seen from Fig.~\ref{fig:three-converter-standard-certificate} that the phase condition is satisfied for all the converters above 42~Hz. While below 42~Hz, Converter~1's distance is zero between 9~Hz and 18~Hz, indicating violations of the $x$-$z$ graph separation condition~\eqref{eq:xz-separation-condition} in this frequency range. Hence, the system cannot be certified to be stable based on the analysis in Fig.~\ref{fig:three-converter-standard-certificate} due to conservativeness of the analysis.

    \begin{figure}[!t]
    	\centering
        \vspace{0.5mm}
    	\includegraphics[width=3.0in]{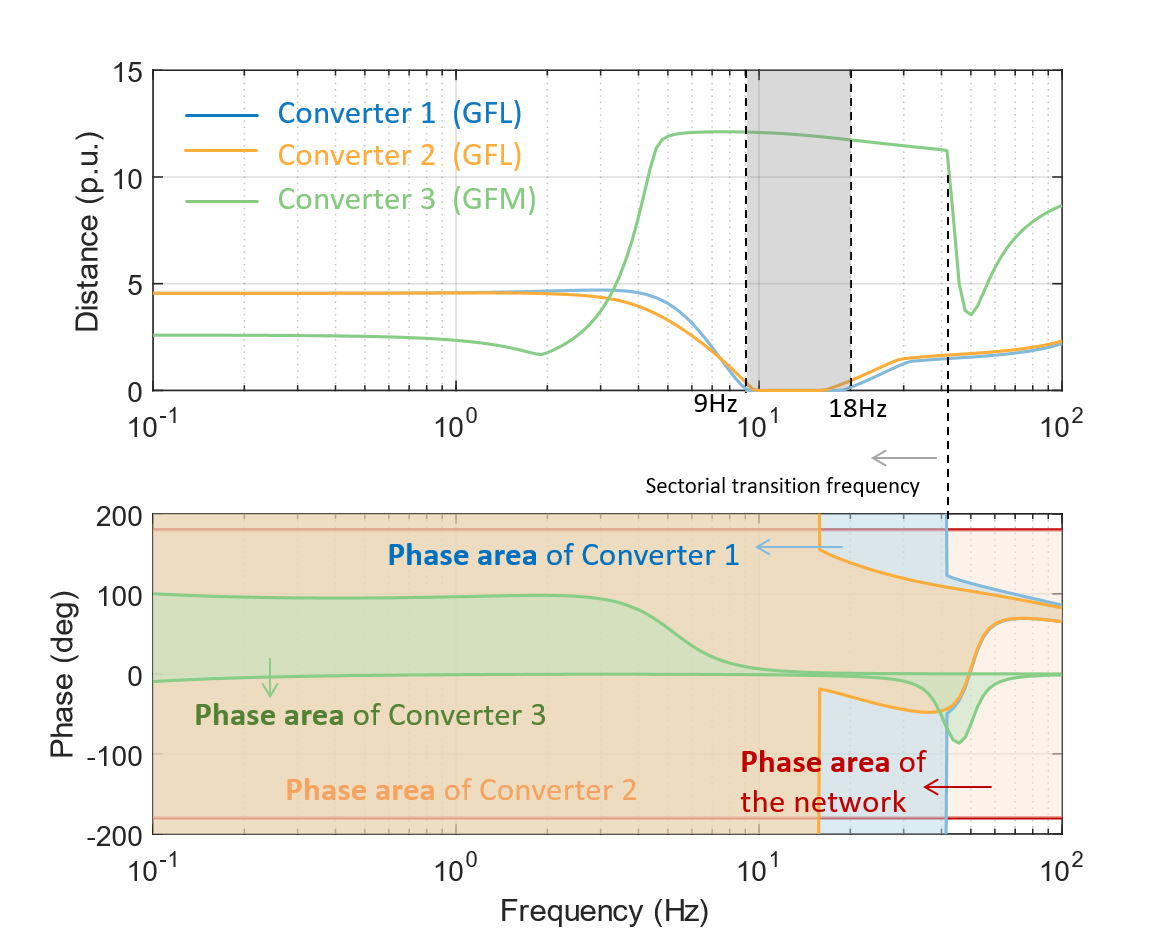}
    	\vspace{-4mm}
    	\caption{Stability distances and phase areas when GFM dynamics are not fused into the network.}
    	\vspace{-3mm}
	\label{fig:three-converter-standard-certificate}
    \end{figure}

    \begin{figure}[!t]
    	\centering
        \vspace{-2mm}
    	\includegraphics[width=3.0in]{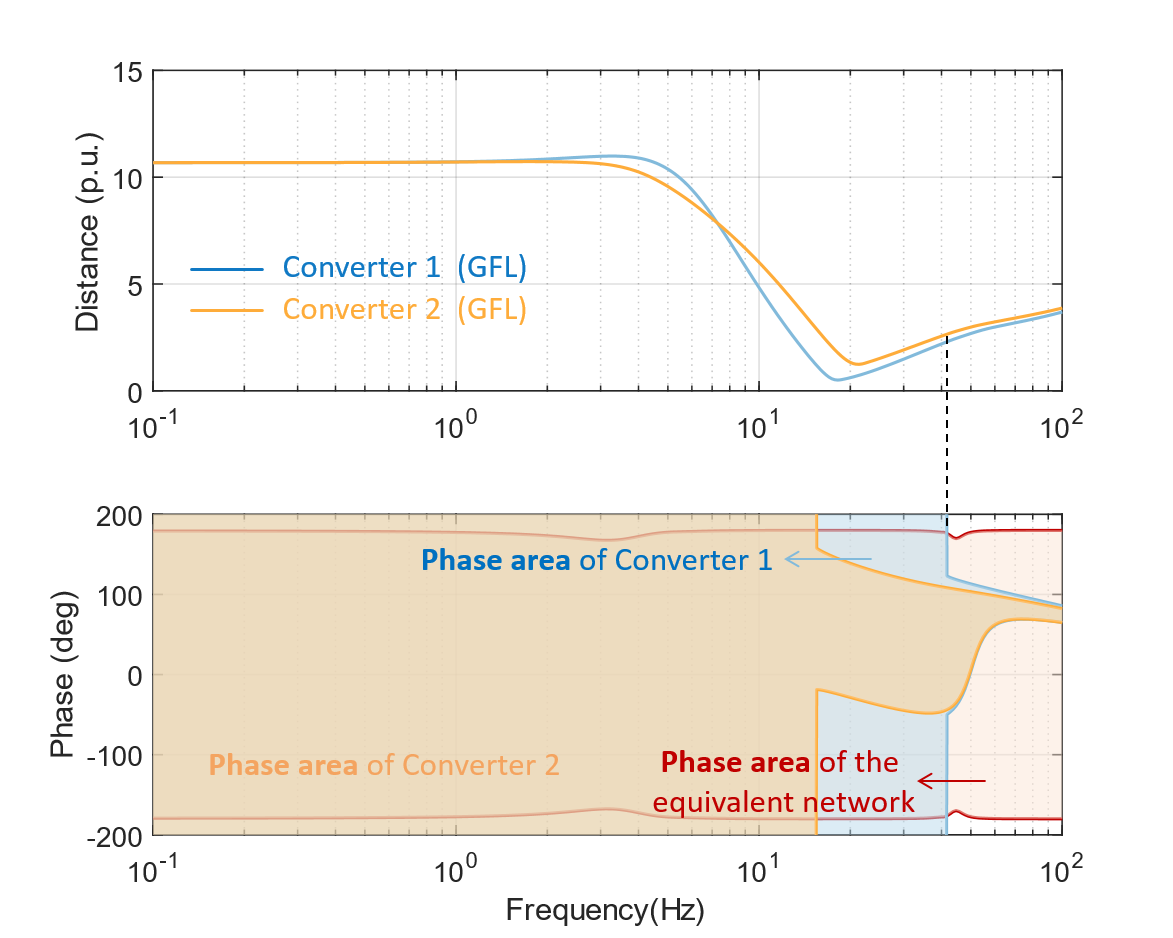}
    	\vspace{-4mm}
    	\caption{Stability distances and phase areas when GFM dynamics are fused into the network.}
    	\vspace{-3mm}
	\label{fig:three-converter-equivalent network-certificate}
    \end{figure}

    We then follow the procedure in~\eqref{eq:schur-complement-factorization} and~\eqref{eq:equivalent network-admittance} to fuse the GFM dynamics into the network and obtain the equivalent network $\widetilde{\bm{Y}}_{\mathrm{GridC}}(s)$, which can significantly reduce the conservativeness since it reflects how GFM converters enhance the power grid strength. The passivity index and the imaginary-axis index of the GFM converter are computed according to~\eqref{eq:gfm-local-indices}, and then they are used to compute $\alpha_{\mathrm{LB}}(j\omega)$ and $\theta_{\mathrm{UB}}(j\omega)$ according to Theorem~\ref{prop:equivalent-network-bounds} and construct the DW shell envelope \(\mathcal R_{\alpha_{\mathrm{LB}},\theta_{\mathrm{UB}}}(j\omega)\). On this basis, Fig.~\ref{fig:three-converter-equivalent network-certificate} plots the distance between the GFL converters' $x$-$z$ graphs and the $x$-$z$ graph of the equivalent network, i.e., $\rho_i(j\omega)$ in~\eqref{eq:xz-graph-stability-margin}, as well as the phase area of the equivalent network defined by $[-\pi+\theta_{\rm UB}(j\omega), \pi - \theta_{\rm UB}(j\omega)]$. It can be seen that the distances of the two GFL converters are larger than 0 across the whole frequency range, indicating that the system is stable. 

    \begin{figure}[!t]
    	\centering
        \vspace{0.5mm}
    	\includegraphics[width=2.8in]{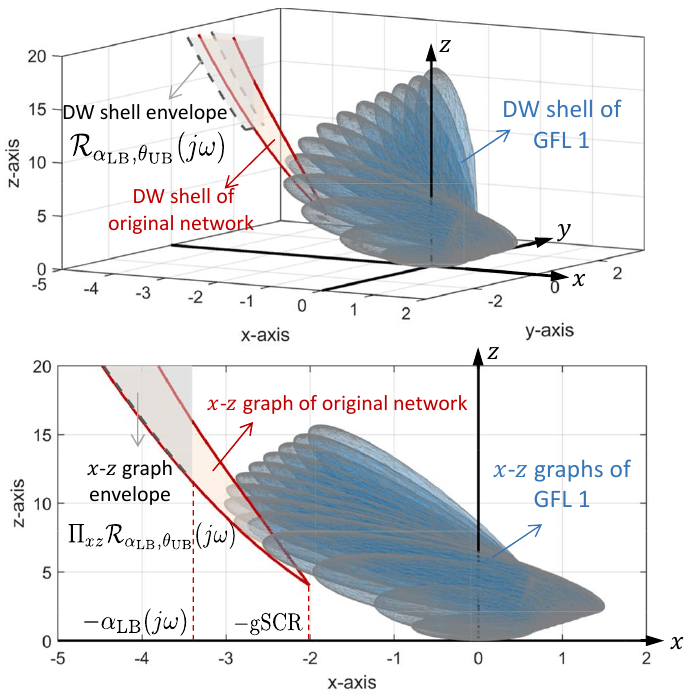}
    	\vspace{-4.5mm}
    	\caption{DW shell envelope and $x$-$z$ graph envelope of the original network and the equivalent network from 5~Hz to 30~Hz.}
    	\vspace{-3mm}
	\label{fig:three-converter-xz-graphs}
    \end{figure}

    \begin{figure}[!t]
    	\centering
        \vspace{0.5mm}
    	\includegraphics[width=2.6in]{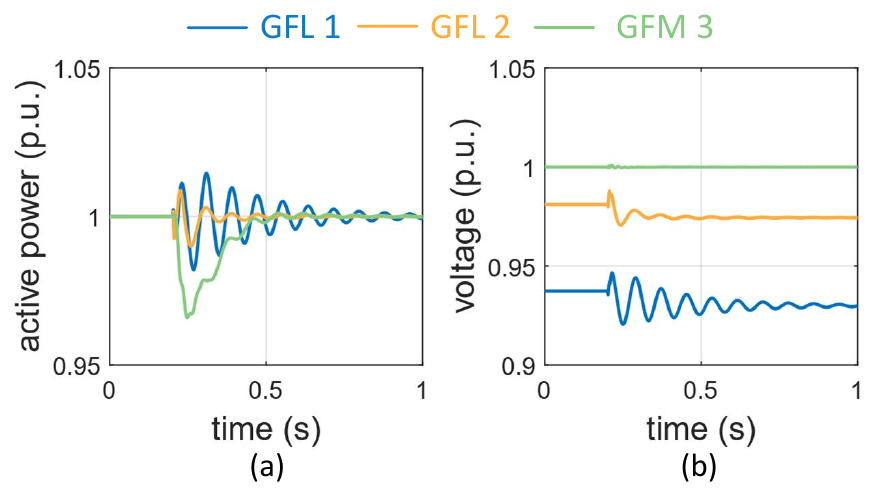}
    	\vspace{-4.5mm}
    	\caption{Time-domain responses of the three-converter system: (a) active power, and (b) voltage magnitude.}
    	\vspace{-4mm}
	\label{fig:three-converter-time-domain}
    \end{figure}
    
    We further plot the DW shells and $x$-$z$ graphs in Fig.~\ref{fig:three-converter-xz-graphs} to show why fusing the GFM dynamics into the network can reduce the conservativeness. 
    It can be seen that the original DW shell and $x$-$z$ graph of the network both intersect with the GFL Converter~1 due to a low power grid strength characterized by a low gSCR, aligned with the results in Fig.~\ref{fig:three-converter-standard-certificate}. After fusing the GFM dynamics into the network, the DW shell envelope \(\mathcal R_{\alpha_{\mathrm{LB}},\theta_{\mathrm{UB}}}(j\omega)\) and the $x$-$z$ graph envelope $\Pi_{xz}\mathcal R_{\alpha_{\mathrm{LB}},\theta_{\mathrm{UB}}}(j\omega)$ of the equivalent network become separated from the GFL Converter~1, which indicate that the system is stable, aligned with the results in~Fig.~\ref{fig:three-converter-equivalent network-certificate}. We can see that the power grid strength increases from gSCR to at least $\alpha_{\rm LB}(j\omega)$ by fusing the GFM dynamics into the network, which reflects how GFM converters improve the stability of GFL converters and justifies the necessity of deriving the equivalent network. 
    By comparison, the results in Fig.~\ref{fig:three-converter-standard-certificate} are conservative because it does not consider the interaction between GFM and GFL converters. 
    Fig.~\ref{fig:three-converter-time-domain} plots the time-domain responses of the system, where a small disturbance occurs at $t=0.2$~s. It can be seen that the system is stable, fully aligned with the results of the equivalent network analysis in Fig.~\ref{fig:three-converter-equivalent network-certificate} and Fig.~\ref{fig:three-converter-xz-graphs}.

    \begin{figure}[!t]
    	\centering
        \vspace{0mm}
    	\includegraphics[width=2.7in]{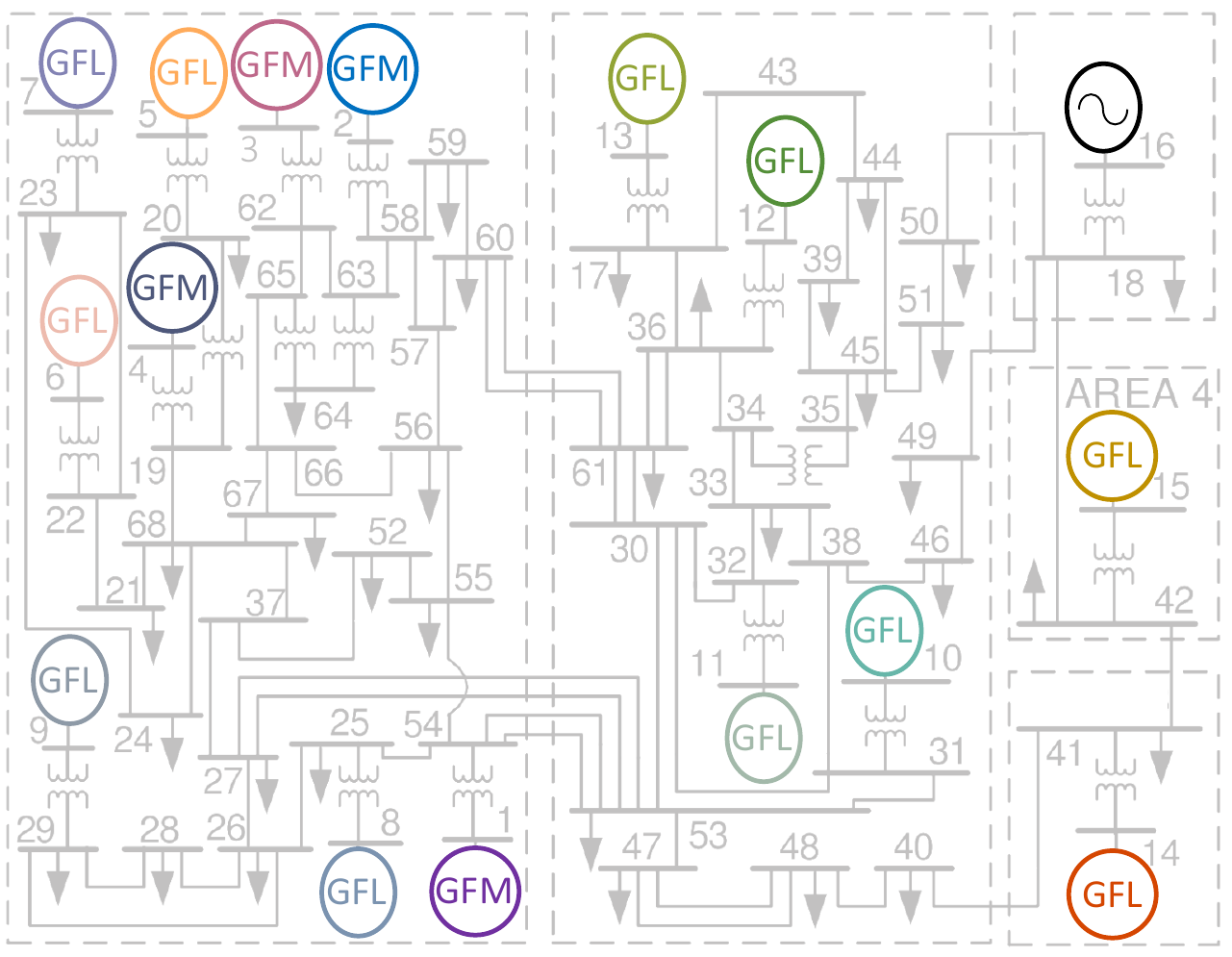}
    	\vspace{-3mm}
    	\caption{A 68-bus system with 4 GFM converters and 11 GFL converters.}
    	\vspace{-1mm}
	\label{fig:modified-ieee-68bus-system}
    \end{figure}

    \subsection{Case Study of the 68-Bus System}

     We use a modified IEEE 68-bus system to show the effectiveness of the proposed method when handling heterogeneous GFM devices. As shown in Fig.~\ref{fig:modified-ieee-68bus-system}, four GFM converters are connected to Buses 1-4, and eleven GFL converters are connected to Buses 5-15. Bus 16 is an infinite bus to model a remote area weakly connected to the system. The detailed system parameters are provided in the supplementary material.

     Two cases are considered: in Case~1, GFM Converters 1 and 2 adopt VSG control, GFM Converter 3 adopts droop control, and GFM Converter 4 adopts matching control to handle dc-link dynamics~\cite{Huang2017ViSynC}; in Case~2, GFM Converter~1 is replaced by a synchronous generator (SG) under the same capacity, while all other components and control settings remain unchanged.
     Fig.~\ref{fig:gfm-passivity-indices} compares the passivity indices of the GFM converters and the SG, which show quite different passivity profiles. For instance, the passivity index of the SG is higher than those of the other devices within $[5~{\rm Hz},~12~{\rm Hz}]$, indicating a higher stabilizing effect in this frequency range.


\begin{figure}[!t]
    \centering
    \includegraphics[width=2.7in]{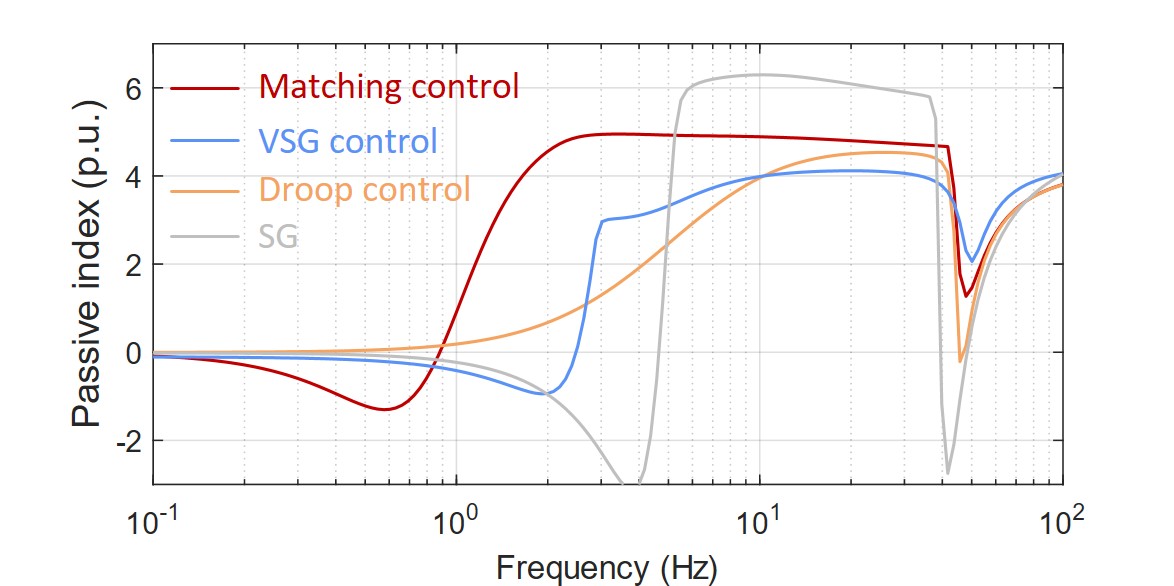}
    \vspace{-3mm}
    \caption{Passivity indices of the four different GFM devices.}
    \label{fig:gfm-passivity-indices}
\end{figure}

    Consider the setting of Case~1. Fig.~\ref{fig:ieee-68bus-frequency-certificate} plots the stability distances of all the GFL converters, which are computed according to~\eqref{eq:xz-graph-stability-margin}, as well as the phase areas of the GFL converters and the equivalent network. Notice that the computations of the stability distances involve the DW shell envelope of the equivalent network, which is obtained from the two GFM indices using Theorem~\ref{prop:equivalent-network-bounds}. The phase area of the equivalent network is still defined by $[-\pi+\theta_{\rm UB}(j\omega), \pi - \theta_{\rm UB}(j\omega)]$. It can be seen from Fig.~\ref{fig:ieee-68bus-frequency-certificate} that GFL~5 has the largest sectorial transition frequency, which is 16~Hz. Above 16~Hz, the small-phase condition is satisfied. However, below 16~Hz, there exists a frequency range where the stability distances of GFL~5 and GFL~6 are zero, that is, condition~\eqref{eq:equivalent network-xz-separation} is violated, indicating an instability risk.


\begin{figure}[!t]
    \centering
    \includegraphics[width=3in]{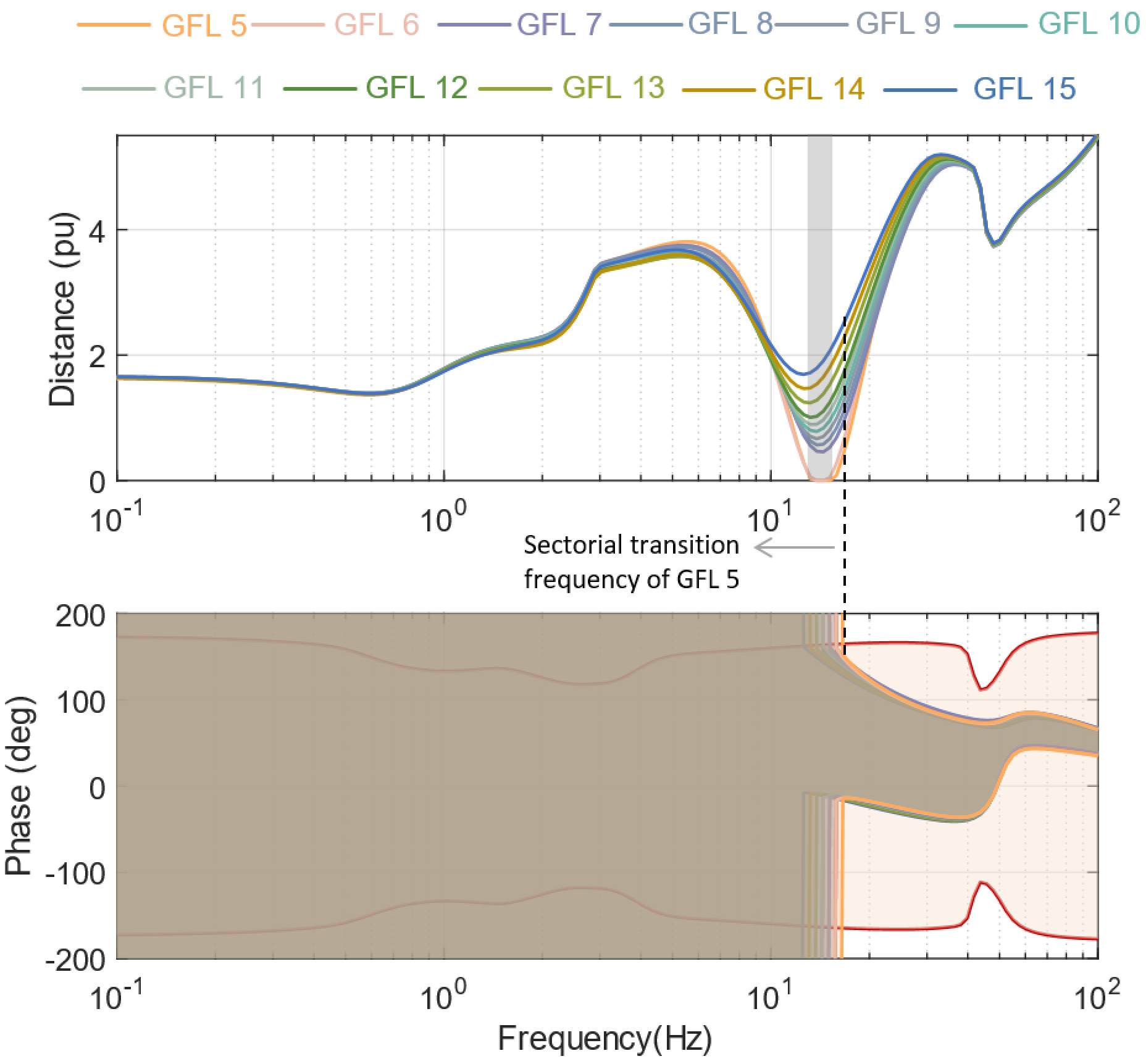}
    \vspace{-3mm}
    \caption{Frequency-wise stability distances and phase areas of the GFL converters and the equivalent network in Case~1.}
    \vspace{-3mm}
    \label{fig:ieee-68bus-frequency-certificate}
\end{figure}

    Fig.~\ref{fig:ieee-68bus-critical-frequency-range}~(a) is the zoomed-in version of Fig.~\ref{fig:ieee-68bus-frequency-certificate} around the critical frequency range where the stability distances approach zero. Fig.~\ref{fig:ieee-68bus-critical-frequency-range}~(b) further provides the corresponding results of Case~2, where GFM Converter~1 in Case~1 is replaced by an SG with the same capacity. It can be seen from Fig.~\ref{fig:ieee-68bus-critical-frequency-range}~(b) that the stability distances of GFL~5 and GFL~6 become larger than 0, and thus the system is guaranteed to be stable. Such a stability improvement compared to Fig.~\ref{fig:ieee-68bus-critical-frequency-range}~(a) is attributed to the higher passivity index of the SGs in the critical frequency range, as shown in Fig.~\ref{fig:gfm-passivity-indices}. To further demonstrate why the stability distances become larger than zero in Fig.~\ref{fig:ieee-68bus-critical-frequency-range}~(b), we plot in Fig.~\ref{fig:Dwshell15Hz} the DW shell envelope of the equivalent network and the convex hull of the GFL converters' DW shells at 14~Hz, for both Case~1 and Case~2. It can be seen that the DW shell envelope of the equivalent network in Case~1 intersects with the converter's DW shell, aligned with the zero stability distance in Fig.~\ref{fig:ieee-68bus-critical-frequency-range}~(a). Thanks to the higher passivity index of the SG at 14~Hz, Case~2 exhibits a higher power grid strength, as shown in Fig.~\ref{fig:Dwshell15Hz} that the DW shell envelope of Case~2 has a larger $\alpha_{\rm LB}(j\omega)$ and is thus separated from the converter's DW shell. This is aligned with Fig.~\ref{fig:ieee-68bus-critical-frequency-range}~(b) where the stability distances become larger than zero.  

    Fig.~\ref{fig:ieee-68bus-time-domain} shows the
   responses of the system under the two settings, where a small disturbance occurs at $t=0.2~\mathrm{s}$. It can be seen that the system is unstable in Case~1 and it is stable in Case~2, consistent with the previous analysis based on DW shells and stability distances. Moreover, the oscillation frequency in Case~1 is within the critical frequency range where the stability distance is zero. Our approach can also identify the critical GFL converters that cause the instability. For instance, it is the GFL~5 and GFL~6 that result in the instability in Case~1, since the instability can be avoided by modifying their design so that their stability distances become larger than 0, just like the other GFL converters. 

    We remark that the curves in Fig.~\ref{fig:gfm-passivity-indices} are obtained using typical parameters of the devices, and one may achieve better GFM performance than those in Fig.~\ref{fig:gfm-passivity-indices}, e.g., by designing better control structures and employing better parameters. It can also be seen from Fig.~\ref{fig:gfm-passivity-indices} that although the SG has a high passivity index in the range of $[5~{\rm Hz},~12~{\rm Hz}]$, its passivity index within $[2~{\rm Hz},~5~{\rm Hz}]$ is worse than the GFM converters. As a final remark, the passivity index defined in this paper should be understood as a ``rotated'' passivity index since we multiply ${R}^{-1}(s)$ with the converter's admittance matrix in~\eqref{eq:rescaled-converter-admittance}, similar to the weighting matrix introduced in~\cite{chen2024extended}.


    
\begin{figure}[!t]
    \centering
    \includegraphics[width=3.1in]{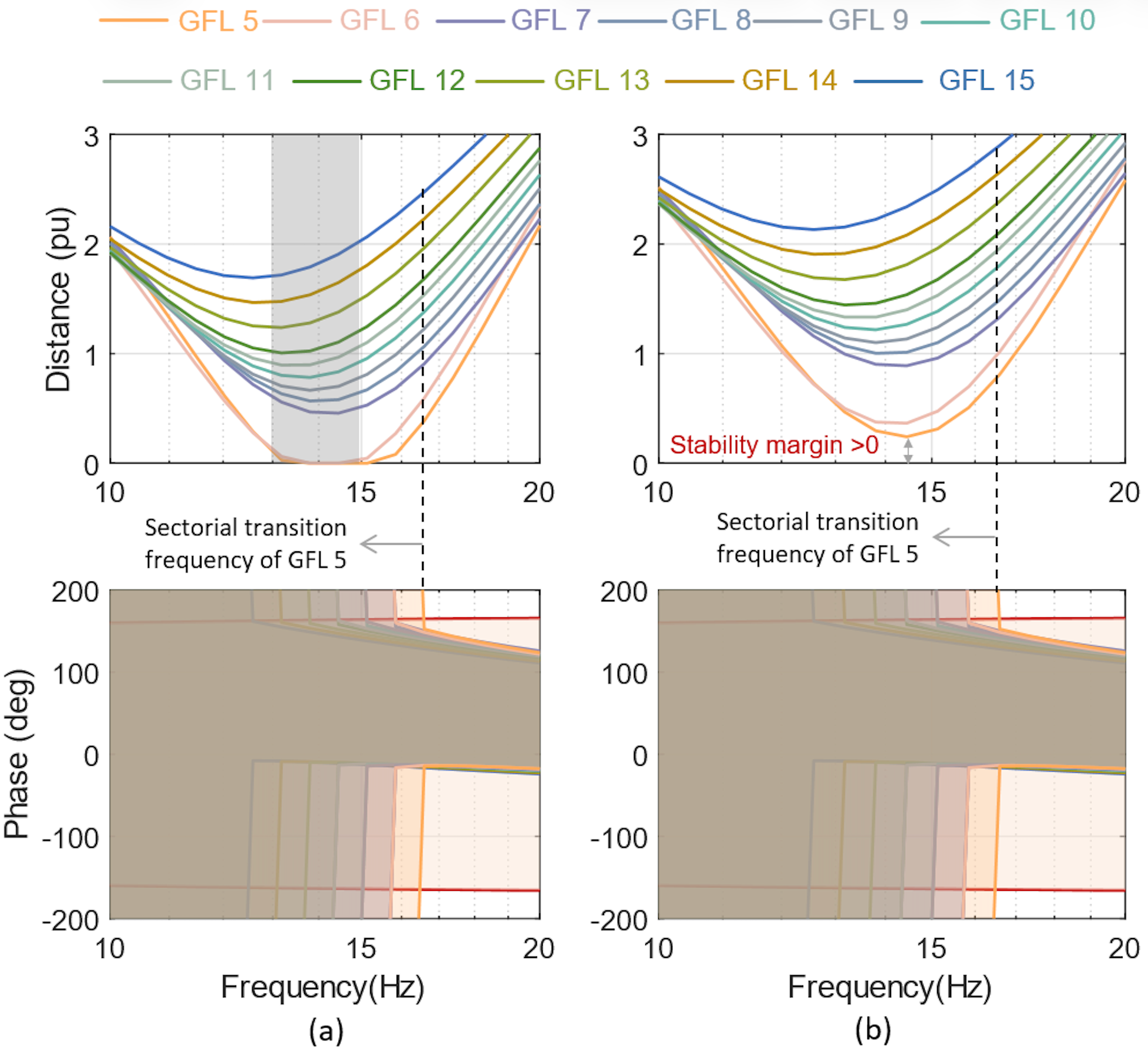}
    \vspace{-4mm}
    \caption{Zoomed-in frequency-wise stability distances and phase areas of the GFL converters and the equivalent network in the critical frequency range: (a) Case~1, and (b) Case~2.}
    \vspace{-2mm}
    \label{fig:ieee-68bus-critical-frequency-range}
\end{figure}

\begin{figure}[!t]
    \centering
    \includegraphics[width=2.8in]{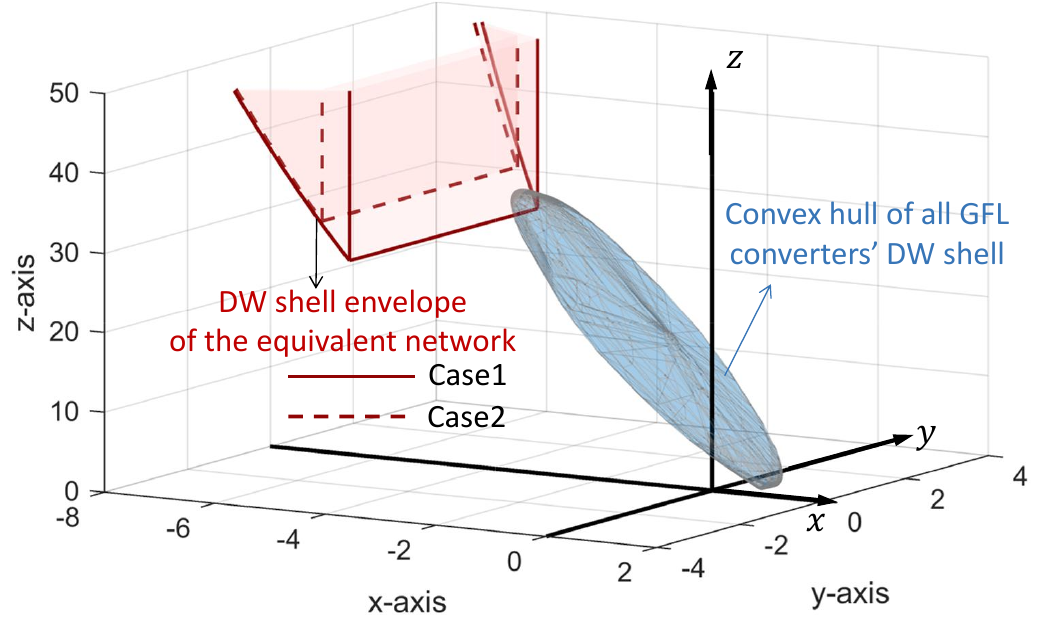}
    \vspace{-4mm}
    \caption{The DW shell envelope of the equivalent network and the convex hull of the GFL converters' DW shells at 14~Hz.}
    \label{fig:Dwshell15Hz}
     \vspace{-4mm}
\end{figure}


\begin{figure}[!t]
    \centering
    \includegraphics[width=2.7in]{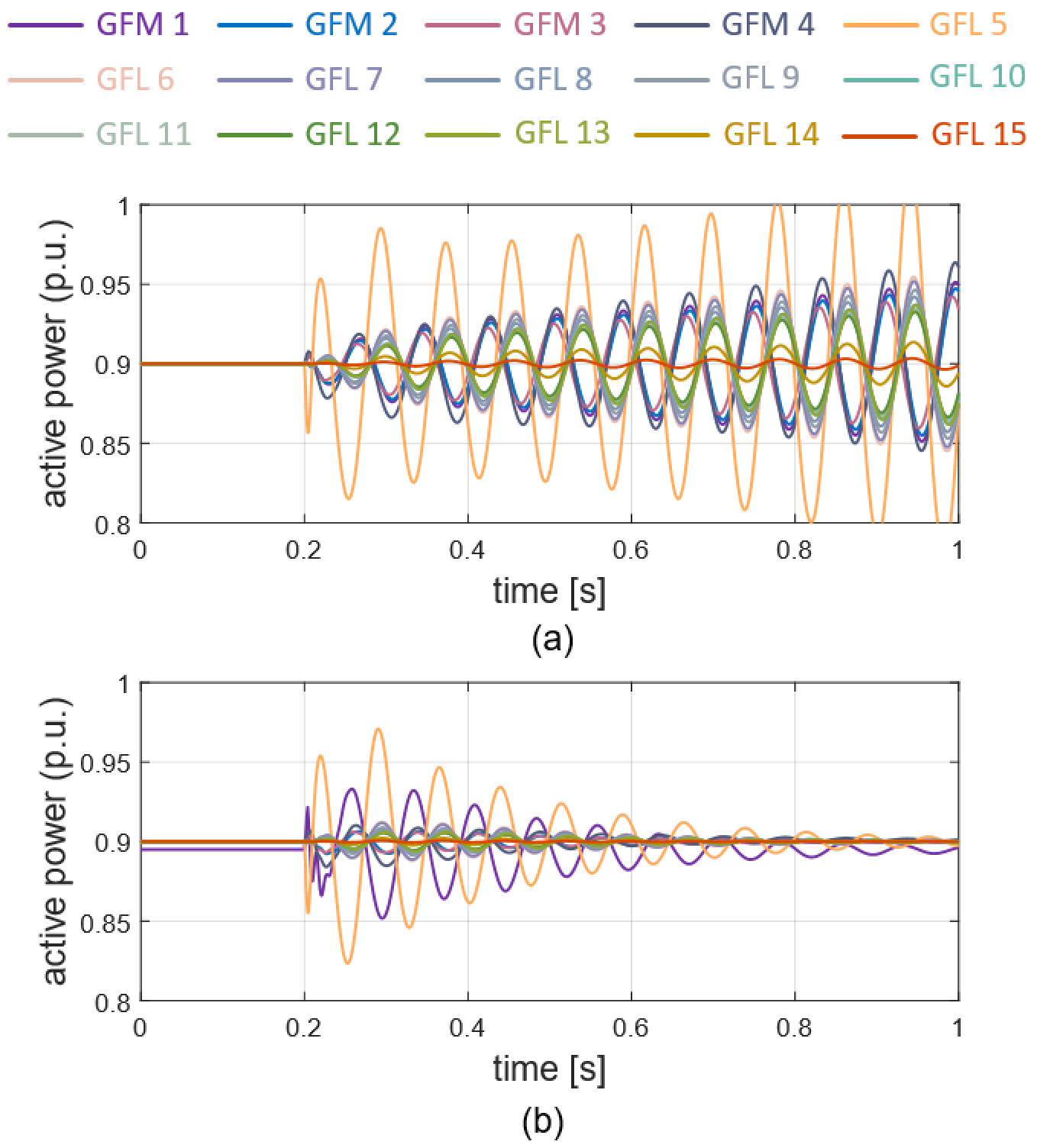}
    \vspace{-4mm}
    \caption{Time-domain responses of the modified IEEE 68-bus system: (a) Case~1, and (b) Case~2.}
    \vspace{-2mm}
    \label{fig:ieee-68bus-time-domain}
\end{figure}

     
    \section{Conclusions}\label{sec:conclusions}
    This paper developed a geometric framework for quantifying the stabilizing effects of heterogeneous GFM devices.
    Firstly, we derive an equivalent network reformulation which fuses the GFM dynamics into the network seen by the GFL converters. Then, we propose two GFM indices, including the passivity index and the imaginary-axis index, to capture the essential impact of GFM converters, and explicitly show how these two indices affect the DW shell envelope of the equivalent network and thus the overall system stability. For instance, we show that a higher passivity index increases the distance between the DW shell envelope of the equivalent network and the DW shells of the GFL converters, thereby increasing the power grid strength. 
    Our approach is scalable as it enables decentralized analysis of the interaction between the GFL converters and the equivalent network that incorporates the GFM dynamics. Moreover, it can handle heterogeneous GFM converters by focusing on the two GFM indices, which is suitable for analyzing large-scale converter-dominated power systems. 
    Future work will focus on the GFM controller synthesis problem based on DW shells.
    \vspace{-2mm}
    

\bibliographystyle{IEEEtran}
\bibliography{REFERANCE}
\numberwithin{equation}{section}
\vspace{0mm}

\newpage

\appendices
\numberwithin{equation}{section}
\vspace{0mm}
\section{Parameters of the Test Systems}

The main parameters of the three-converter system in Example 1 are as follows. The base value for power is 100 MVA and for frequency it is 50 Hz. Impedances $(Z_{i,j} = R_{i,j} + jX_{i,j})$ of the transmission lines(pu):
$Z_{1,4}= 0.002 + j0.0285,Z_{4,5}=0.001+j0.228,Z_{4,9}=0.005+j0.0855$,$Z_{5,6}=0.005+j0.114,Z_{3,6}=0.005+j0.1995,Z_{2,8}=0.003+j0.057,Z_{6,7}=0.002+j0.228,Z_{7,8}=0.003+j0.399,Z_{8,9}=0.002+j0.0285,Z_{9}=0.002+j0.0285$.

The main power network parameters of the modified IEEE 68-bus system are as follows. The load profile, shunt capacitors in the \(\Pi\)-model transmission lines, and generator capacities are the same as those in~[19]. The base power is \(100~\mathrm{MVA}\), and the rated frequency is \(50~\mathrm{Hz}\). The transmission-line impedances are denoted by \(Z_{i,j}=R_{i,j}+jX_{i,j}\), and the following values are given in \(10^{-2}~\mathrm{pu}\): \(Z_{1,54}=j0.0905\), \(Z_{2,58}=j0.1250\), \(Z_{3,62}=j0.1\), \(Z_{4,19}=0.0035+j0.071\), \(Z_{5,20}=0.0045+j0.09\), \(Z_{6,22}=j0.0715\), \(Z_{7,23}=0.0025+j0.136\), \(Z_{8,25}=0.003+j0.116\), \(Z_{9,29}=0.004+j0.078\), \(Z_{10,31}=j0.13\), \(Z_{11,32}=j0.065\), \(Z_{12,36}=j0.0375\), \(Z_{13,17}=j0.2475\), \(Z_{14,41}=j0.0075\), \(Z_{15,42}=j0.0075\), \(Z_{16,18}=j0.015\), \(Z_{17,36}=0.0025+j0.0225\), \(Z_{17,43}=0.0025+j0.138\), \(Z_{18,42}=0.002+j0.03\), \(Z_{18,49}=0.038+j0.5709\), \(Z_{18,50}=0.006+j0.144\), \(Z_{19,20}=0.0035+j0.069\), \(Z_{19,68}=0.008+j0.0976\), \(Z_{21,22}=0.004+j0.07\), \(Z_{21,68}=0.004+j0.0675\), \(Z_{22,23}=0.003+j0.048\), \(Z_{23,24}=0.011+j0.175\), \(Z_{24,68}=0.0015+j0.0295\), \(Z_{25,26}=0.016+j0.1615\), \(Z_{25,54}=0.035+j0.043\), \(Z_{26,27}=0.007+j0.0735\), \(Z_{26,28}=0.0215+j0.237\), \(Z_{26,29}=0.0285+j0.3125\), \(Z_{27,37}=0.0065+j0.0865\), \(Z_{27,53}=0.16+j1.6\), \(Z_{28,29}=0.007+j0.0755\), \(Z_{30,31}=0.0065+j0.0935\), \(Z_{30,32}=0.012+j0.144\), \(Z_{30,53}=0.004+j0.037\), \(Z_{30,61}=0.0047+j0.0458\), \(Z_{31,38}=0.0055+j0.0735\), \(Z_{31,53}=0.008+j0.0815\), \(Z_{32,33}=0.004+j0.0495\), \(Z_{33,34}=0.0055+j0.0785\), \(Z_{33,38}=0.018+j0.222\), \(Z_{34,35}=0.0005+j0.037\), \(Z_{34,36}=0.0165+j0.0555\), \(Z_{35,45}=0.0035+j0.0875\), \(Z_{36,61}=0.0055+j0.049\), \(Z_{37,52}=0.0035+j0.041\), \(Z_{37,68}=0.0035+j0.0445\), \(Z_{38,46}=0.011+j0.142\), \(Z_{39,44}=j0.2055\), \(Z_{39,45}=j0.4195\), \(Z_{40,41}=0.03+j0.42\), \(Z_{40,48}=0.01+j0.11\), \(Z_{41,42}=0.02+j0.3\), \(Z_{43,44}=0.0005+j0.0055\), \(Z_{44,45}=0.0125+j0.365\), \(Z_{45,51}=0.002+j0.0525\), \(Z_{46,49}=0.009+j0.137\), \(Z_{47,48}=0.0063+j0.067\), \(Z_{47,53}=0.0065+j0.094\), \(Z_{50,51}=0.0045+j0.1105\), \(Z_{52,55}=0.0055+j0.0665\), \(Z_{53,54}=0.0175+j0.2055\), \(Z_{54,55}=0.0065+j0.0755\), \(Z_{55,56}=0.0065+j0.1065\), \(Z_{56,57}=0.004+j0.064\), \(Z_{56,66}=0.004+j0.0645\), \(Z_{57,58}=0.001+j0.013\), \(Z_{57,60}=0.004+j0.056\), \(Z_{58,59}=0.003+j0.046\), \(Z_{58,63}=0.0035+j0.041\), \(Z_{59,60}=0.002+j0.023\), \(Z_{60,61}=0.0115+j0.1815\), \(Z_{62,63}=0.002+j0.0215\), \(Z_{62,65}=0.002+j0.0215\), \(Z_{63,64}=0.008+j0.2175\), \(Z_{64,65}=0.008+j0.2175\), \(Z_{65,66}=0.0045+j0.0505\), \(Z_{66,67}=0.009+j0.1085\), and \(Z_{67,68}=0.0045+j0.047\).

\newpage

\section{Parameters of the Devices}

See Table~\ref{tab:main_parameters}.

\begin{table}[!t]
\centering
\caption{Main parameters of the converters and synchronous generator}
\label{tab:main_parameters}
\renewcommand{\arraystretch}{1.12}
\setlength{\tabcolsep}{4pt}
\small

\begin{tabular}{|p{0.48\columnwidth}|c|c|}
\hline
\multicolumn{3}{|c|}{\textbf{Parameters of the GFL Converter}} \\
\hline
Parameter & Symbol & Value \\
\hline
Filter inductance
& $L_F$ & $0.05$ p.u. \\
\hline
Filter capacitance
& $C_F$ & $0.06$ p.u. \\
\hline
Grid-side inductance
& $L_g$ & $0.05$ p.u. \\
\hline
Current-loop PI gains
& $\{K_{p,i},K_{i,i}\}$ & $\{0.3,10\}$ \\
\hline
Voltage-feedforward time constant
& $T_{\mathrm{ff}}$ & $0.02$ s \\
\hline
Power-loop PI gains
& $\{K_{p,pq},K_{i,pq}\}$ & $\{0.5,40\}$ \\
\hline
PLL bandwidth
& $\omega_{\mathrm{PLL}}$ & $40$ rad/s \\
\hline

\multicolumn{3}{|c|}{\textbf{Parameters of the VSG-Controlled GFM Converter}} \\
\hline
Parameter & Symbol & Value \\
\hline
Filter inductance
& $L_F$ & $0.05$ p.u. \\
\hline
Filter capacitance
& $C_F$ & $0.06$ p.u. \\
\hline
Grid-side inductance
& $L_g$ & $0.15$ p.u. \\
\hline
Current-loop PI gains
& $\{K_{p,i},K_{i,i}\}$ & $\{0.3,10\}$ \\
\hline
Voltage-feedforward time constant
& $T_{\mathrm{ff}}$ & $0.02$ s \\
\hline
Voltage-loop PI gains
& $\{K_{p,v},K_{i,v}\}$ & $\{2,10\}$ \\
\hline
Virtual inertia coefficient
& $J$ & $2$ \\
\hline
Damping coefficient
& $D$ & $50$ \\
\hline

\multicolumn{3}{|c|}{\textbf{Parameters of the Droop-Controlled GFM Converter}} \\
\hline
Parameter & Symbol & Value \\
\hline
Filter inductance
& $L_F$ & $0.05$ p.u. \\
\hline
Filter capacitance
& $C_F$ & $0.06$ p.u. \\
\hline
Grid-side inductance
& $L_g$ & $0.15$ p.u. \\
\hline
Current-loop PI gains
& $\{K_{p,i},K_{i,i}\}$ & $\{0.3,10\}$ \\
\hline
Voltage-feedforward time constant
& $T_{\mathrm{ff}}$ & $0.02$ s \\
\hline
Voltage-loop PI gains
& $\{K_{p,v},K_{i,v}\}$ & $\{2,10\}$ \\
\hline
Active-power droop coefficient
& $k_p$ & $50$ \\
\hline

\multicolumn{3}{|c|}{\textbf{Parameters of the DC-Link-Synchronized GFM Converter}} \\
\hline
Parameter & Symbol & Value \\
\hline
Filter inductance
& $L_F$ & $0.05$ p.u. \\
\hline
Filter capacitance
& $C_F$ & $0.06$ p.u. \\
\hline
Grid-side inductance
& $L_g$ & $0.15$ p.u. \\
\hline
Current-loop PI gains
& $\{K_{p,i},K_{i,i}\}$ & $\{0.3,10\}$ \\
\hline
Voltage-feedforward time constant
& $T_{\mathrm{ff}}$ & $0.02$ s \\
\hline
Voltage-loop PI gains
& $\{K_{p,v},K_{i,v}\}$ & $\{2,10\}$ \\
\hline
Synchronization coefficient
& $K_T$ & 4 \\
\hline
Inertia coefficient
& $K_J$ & 10 \\
\hline
Damping coefficient
& $K_D$ & 750 \\
\hline

\multicolumn{3}{|c|}{\textbf{Parameters of the Synchronous Generator}} \\
\hline
\multicolumn{2}{|l|}{Synchronous-generator parameters}
& Ref.~[19] \\
\hline
\end{tabular}
\end{table}

\end{document}